\providecommand{\keywords}[1]{\textbf{\textit{Index terms---}} #1}
\newtheorem{claim}{Claim}[section]
\newtheorem{theorem}{Theorem}[section]
\newtheorem{lemma}{Lemma}[section]
\theoremstyle{definition}
\newtheorem*{remark}{Remark}
\title{Species tree estimation using ASTRAL: how many genes are enough?}
\author[1]{Shubhanshu Shekhar}
\author[2]{Sebastien Roch}
\author[1]{Siavash Mirarab\thanks{smirarab@ucsd.edu}}
\affil[1]{Department of Electrical and Computer Engineering, University of California, San Diego}
\affil[2]{Department of Mathematics, University of Wisconsin-Madison, Wisconsin}
\begin{document}

\maketitle

\let\thefootnote\relax\footnotetext{Accepted for oral presentation at RECOMB 2017\\ Under review at IEEE TCBB (submitted on 12/26/2016)}

\begin{abstract}
Species tree reconstruction from genomic data is increasingly
performed using methods that account for sources of gene tree discordance such
as incomplete lineage sorting. One popular method
for reconstructing species trees from unrooted gene tree topologies is
ASTRAL. In this paper, we derive theoretical sample complexity results
for the number of genes required by ASTRAL to guarantee
reconstruction of the correct species tree with high probability. We also validate those theoretical bounds
in a simulation study. Our results indicate that ASTRAL requires
$O(f^{-2}\log n)$ gene trees to reconstruct the species tree correctly with high probability
where $n$ is the number of species and $f$ is the length of the shortest branch 
in the species tree. 
Our simulations, some under the anomaly zone, 
show trends consistent with the theoretical bounds and also provide some practical
insights on the conditions where ASTRAL works well. 
\end{abstract}

\keywords{
 Phylogenetics, Species tree estimation, Incomplete lineage sorting, Sample complexity, 
ASTRAL. 
}

\section{Introduction}
\label{sec:introduction}

Evolutionary
relationships between organisms are typically represented
in the form of a phylogeny known as a species tree.
Each branching point in a species tree represents
a historical speciation event. 
On the other hand, phylogenies of individual
parts of genomes of extant species do not always follow the same patterns
as the speciation events. 
This discordance can be
due to various biological processes that include
incomplete lineage sorting (ILS), 
duplication and loss, horizontal gene transfer, 
and hybridization~\cite{Maddison1997,Degnan2009,Bapteste2013}.
These processes can impact not only functional genes but any region of the genome; however, following the convention, we use the term {\em gene tree} to refer to a tree specific to a single region of the genome. 
The potential discordance between gene trees and species tree has motivated a growing set of methods for inferring species trees while accounting for these differences~\cite{Edwards2007,mpest,astral,astral2,njst,Heled2010,SVDquartets,star,glass,metal,Chaudhary2013,Wehe2008}.
Most existing techniques
focus on one of the biological causes
of discordance (but there are exceptions~\cite{Szollsi2014,Boussau2013,Rasmussen2012,Yu2014}). In particular ILS, a ubiquitous cause of discordance found in many datasets~\cite{1kp-pilot,Pollard2006,Song2012},
has been the subject of much interest.  

ILS is best understood by considering the interplay between phylogenetics
and the population genetic coalescent process~\cite{Kingman1982}.
The most widely used model to study ILS is the multispecies coalescent model (MSC)~\cite{Pamilo1988,Rannala2003}.
Under the MSC model, to generate a gene tree, the Kingman's coalescent
process~\cite{Kingman1982} is followed on each species tree branch,
which represents a population and is often assumed to have a fixed population size; all lineages that reach the top of a branch are copied to the ancestral population, which follows its own coalescent process. 
When two lineages from two different species fail to coalesce in their
lowest common ancestral population, they both go back to an earlier branch where alleles from
other species are also present; there, the two lineages may first coalesce with lineages from those other species before coalescing with each other.
When this happens, the resulting lineage tree differs from the species tree, and is said to experience ILS.


When only concerned with gene tree topologies, species tree branch lengths can be specified in coalescent units (the number of generations divided by the effective population size)~\cite{Degnan2009}. 
A species tree with branch lengths in coalescent units uniquely defines a distribution on gene tree topologies~\cite{Degnan2005} under
the MSC model.

Species tree estimation despite ILS is possible using an array of methods that use various strategies, including co-estimation of gene trees and species tree~\cite{best,Heled2010,Larget2010},
species tree estimation without gene trees~\cite{SVDquartets,snapp}, and two-step approaches, which first estimate all gene trees independently and then combine them using a summary method (e.g., \cite{glass,metal,star,mpest}). Since gene tree inference typically produces unrooted trees, summary methods that operate on unrooted trees
are more useful.
An early unrooted method was NJst~\cite{njst}, 
which can be considered~\cite{Allman2016} the unrooted version of 
STAR, and is implemented efficiently in a tool
called ASTRID~\cite{astrid}.
A newer suite of methods called ASTRAL~\cite{astral,astral2}
has also been designed and has been widely used on biological datasets (e.g., ~\cite{Yang2015,Laumer2015,Giarla2015,Rothfels2015,Andrade2015,1kp-pilot,Cannon2016,Rouse2016,Dutheil2016,Yuan2016,Huang2016}).

The ASTRAL suite of tools seek to solve the Maximum Quartet Support Species Tree (MQSST) problem. 
Each quartet of leaves can have one of three unrooted tree topologies and, for each quartet
selected from the full set of leaves, the species tree will induce one of the three topologies. 
The quartet score of a species tree with respect to a set of gene trees is the sum of the number of induced quartet topologies shared between the species tree and each gene tree (see Appendix~\ref{subsec1} for a formal definition).
The MQSST problem is to find the species tree with the maximum quartet score with respect
to a set of input gene trees. 
The problem is NP-hard~\cite{Lafond2016}, and this has led to 
the development of several versions of ASTRAL (Table~\ref{tab:sum}). 
The exact version of ASTRAL, which we call ASTRAL*, uses dynamic programming
to solve the problem exactly in exponential time.
ASTRAL* has limited scalability, motivating the definition of 
a constrained version of the MQSST problem
that restricts the species tree to draw its branches from a predefined set of bipartitions. 
In ASTRAL-I, the constraint set is the set of bipartitions in the input gene trees~\cite{astral}. ASTRAL-II further expands the 
constraint set heuristically using several complicated and 
empirically motivated strategies~\cite{astral2}
that evade straightforward mathematical formulation. 

ASTRAL*, ASTRAL-I, and ASTRAL-II are all statistically consistent estimators of the species tree given true gene trees~\cite{astral}.
The proof of consistency follows from a result by
Allman {\em et al.} that shows that, for any quartet of leaves, the species tree topology has no less than a 1/3 probability of matching each gene tree~\cite{Allman}; thus, the most likely quartet gene tree matches the species tree. 
However, this result does not extend to more than four species.
In fact, for five species or more, there always exist species trees where the probability of an unrooted gene tree matching the unrooted species tree is less than some other tree topology~\cite{Degnan2013}.
A species tree that does not match the most likely gene tree is said to be in the ``anomaly zone''~\cite{Degnan2006,Degnan2013,Rosenberg2013}.

\begin{table}[!t]
\renewcommand{\arraystretch}{1.3}
\caption{Versions of ASTRAL, their
constraint set ($X$), and time complexity with 
$n$ (the number
of species), $m$ (the number of genes), and $|X|$.}
\label{tab:sum}
\centering
\begin{small}
\begin{tabular}{|c||c|c|}
\hline
\textbf{Version} & \textbf{Constrain set ($X$)} & \textbf{Time}\\
\hline
ASTRAL* & Unconstrained (i.e., exact)& $O(nm3^n)$\\
\hline
ASTRAL-I& Gene tree bipartitions& $O(n^4m^3)$\\\hline
ASTRAL-II& ASTRAL-I$+$heuristic additions&$O(nm|X|^{2})$ \\
\hline
\end{tabular}
\end{small}
\end{table}

Despite widespread use of the ASTRAL suite and its high accuracy in simulations~\cite{astral,astral2,astrid,distique}, little is understood about 
its sample complexity. In this context, the sample complexity is simply the number of genes ($m$) required to guarantee correct species tree reconstruction with high probability.
Sample complexity is typically established asymptotically and with respect to the number
of species ($n$) or the length of the shortest branch in the species tree ($f$).

Under the MSC model, shorter branches in coalescent units result in increased gene tree discordance and consequently increase the difficulty of reconstructing the species tree.
Sample complexity results have been established
for some methods that are not in wide use. 
For instance, for sufficiently small $f$, the GLASS and STEAC algorithms, which use both gene tree branch length and topology, need the number of true gene trees to scale linearly with $f^{-1}$ and $f^{-2}$, respectively~\cite{Roch2013}. 
For summary methods, no better bound 
has yet been found. More specifically, 
for algorithms that only use gene tree topology,
the best demonstrated data requirement is $f^{-2}$.
It is worth noting that these results are all 
assuming that true gene trees are known. We revisit
the issue
of gene tree error and review known results for methods 
that do not rely
on input gene trees in the discussion section.

The only previous sample complexity results specifically relevant to
ASTRAL-II is the bipartition cover results established by Uricchio {\em et al.}~\cite{rosenberg2016}.
They establish
loose upper bounds on the required number of genes such that each bipartition
in the species tree is observed in at least one of the gene trees with high probability.
While this result does not help in analyzing the data requirement of
ASTRAL*, it can be used to limit the complexity of ASTRAL-I and ASTRAL-II once the complexity of ASTRAL* is established.

In this paper, we establish the first theoretical bounds for the sample complexity of 
ASTRAL*.
We show that the sufficient number of true gene trees required by ASTRAL* to reconstruct the correct species tree with high probability grows quadratically with the inverse of the shortest branch length in the species tree and logarithmically with the number of species. We further show that the necessary number of genes for ASTRAL* to reconstruct the species tree has a similar asymptotic behavior.
We then test the data requirements of ASTRAL-II in simulations and show that its empirical data requirements match the theoretical results for ASTRAL*. 
Our simulation study focuses on datasets with extremely short
branches, including some conditions that fall under the anomaly zone, 
and the results indicate that
ASTRAL-II can be made arbitrarily accurate given enough 
true gene trees even under very difficult conditions.  
We also test the data requirement of NJst~\cite{njst}, as implemented in the ASTRID~\cite{astrid} software package, in simulations and show that the relative performance of ASTRAL-II and NJst depends on the shape of the species tree.  

\section{Theoretical bounds}

We first present theoretical bounds on the number of gene trees required by the exact ASTRAL* algorithm to reconstruct the true species tree with high probability. We present the main results
here and leave the proofs for the appendix. 

Let $n$ be the number of leaves in the species tree, and let $m$ denote the number of gene trees, all generated
from the species tree according to the MSC model.
We use $f$ to denote the length of the shortest species tree branch
(as an unrooted tree).  
For each quartet of leaves, there are three possible unrooted tree topologies that can be induced by  each gene tree. 
Under the MSC model, we know that for each quartet $q_i$, the probability that a gene tree will be congruent with the species tree is $p_i = 1 -\frac{2}{3}e^{-d_i}$, and the other two configurations have equal probability ($r_i = (1-p_i)/2$) of occurring; here,
$d_i$ gives the length of the path between the two endpoints
of the middle edge of the quartet topology in the species tree, measured in coalescent units~\cite{Allman}. 
ASTRAL* returns a fully binary species tree which shares the maximum number of induced quartets with the input gene trees~\cite{astral} (Appendix~\ref{subsec1}). 
We say ASTRAL* has an error 
when there exists a bipartition in the reconstructed tree that does not appear in the true species tree.

\subsection{Upper bounds}
We first ask how many gene trees are sufficient
for ASTRAL* to reconstruct the true species tree with high probability. 

\begin{theorem}\label{thm:upper}
Consider a model species tree with minimum branch length $f< \log(\sqrt{2})$. Then, for any $\epsilon>0$, ASTRAL* returns the true species tree
 with probability at least $1-\epsilon$ if the number of input error-free gene trees satisfies 
 \begin{equation}
 \label{eq:upper_bound}
  m > \frac{9}{2}\log\bigg(\frac{4{n \choose 4}}{\epsilon}\bigg)\frac{1}{(1-e^{-f})^2}.
 \end{equation}
\end{theorem}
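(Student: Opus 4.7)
The plan is to reduce correctness of ASTRAL* to per-quartet concentration events and conclude by a union bound over all $\binom{n}{4}$ leaf quartets.

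First I would observe that the ASTRAL* objective decomposes additively over quartets: for any binary tree $T$, the quartet score equals $\sum_{q} \sum_{j=1}^{m} \mathbf{1}\{T[q] = G_j[q]\}$. Because every binary tree induces a unique topology on each quartet, the scores of $T^*$ and any other binary tree $T'$ differ only on quartets where $T^*[q] \neq T'[q]$. Hence a sufficient condition for ASTRAL* to return $T^*$ is that for every quartet $q_i$ the species tree topology strictly receives more gene tree votes than either alternative. Writing $N_t^{(i)}$ for the number of genes displaying topology $t \in \{0,1,2\}$ on $q_i$ (with $t=0$ the species tree topology), I would strengthen this to the easier-to-control condition $N_0^{(i)} > m/3$ and $N_1^{(i)}, N_2^{(i)} < m/3$.

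Next I would bound each of these per-quartet events. The quartet's induced internal branch length in $T^*$ spans at least one species tree branch, so $d_i \geq f$, and under the MSC the gene topologies are independent across genes with $p_i = 1 - \tfrac{2}{3} e^{-d_i}$ and $r_i = \tfrac{1}{3} e^{-d_i}$. The resulting gaps $p_i - \tfrac{1}{3} \geq \tfrac{2}{3}(1-e^{-f})$ and $\tfrac{1}{3} - r_i \geq \tfrac{1}{3}(1-e^{-f})$ feed into Hoeffding's inequality to give
\begin{equation*}
\Pr\bigl(N_0^{(i)} \leq m/3\bigr) \leq \exp\bigl(-\tfrac{8}{9}\, m\, (1-e^{-f})^2\bigr), \qquad \Pr\bigl(N_t^{(i)} \geq m/3\bigr) \leq \exp\bigl(-\tfrac{2}{9}\, m\, (1-e^{-f})^2\bigr).
\end{equation*}
The bottleneck is the rate $\tfrac{2}{9}(1-e^{-f})^2$, which is exactly what produces the $9/2$ constant in the theorem; union-bounding the three events per quartet (loosely bounded by four) across all $\binom{n}{4}$ quartets gives total failure probability at most $4\binom{n}{4} \exp\bigl(-\tfrac{2}{9}\, m\, (1-e^{-f})^2\bigr)$, and solving for $m$ delivers the claimed inequality.

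The main obstacle is really the combinatorial reduction in the first step: one needs to confirm that it suffices to control per-quartet empirical majorities independently, without having to rule out coordinated fluctuations across quartets that might conspire to inflate some alternative binary tree's total score. Once the additive decomposition of the quartet score is in hand, this follows from the elementary fact that every binary tree has a well-defined topology on every quartet, after which the rest is a routine Hoeffding-plus-union-bound calculation. I would expect the assumption $f < \log\sqrt{2}$ to play no essential role in the derivation beyond confining attention to the short-branch regime, where $(1-e^{-f})^{-2}$ is the quantity that genuinely governs sample complexity.
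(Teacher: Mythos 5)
Your proposal is correct and follows essentially the same route as the paper: reduce correctness of ASTRAL* to the event that the species tree topology wins a strict plurality on every quartet, apply Hoeffding per quartet using the gap $\delta = \tfrac{2}{3}(1-e^{-f})$, and union-bound over all $\binom{n}{4}$ quartets. The only cosmetic difference is that you compare the counts directly to the threshold $m/3$ while the paper uses two-sided concentration of width $w<\delta/2$ around the means $p_i$ and $r_i$; both yield the identical bottleneck rate $\tfrac{2}{9}m(1-e^{-f})^2$ and hence the same constant $\tfrac{9}{2}$, and your observation that the hypothesis $f<\log\sqrt{2}$ is not actually needed in the argument is consistent with the paper's own proof, which never invokes it.
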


\begin{remark}
In the limit of small $f$, we can use the fact that $\lim_{f\downarrow 0}\frac{1-e^{-f}}{f} \uparrow 1$, and thus for $f$ small enough we have $\frac{1-e^{-f}}{f} \geq \sqrt{0.9} $, and so a sufficient condition on the number of gene trees required is
\begin{equation}
\label{eq:upper_bound2}
m > 20\log\bigg(\frac{n}{6\epsilon}\bigg)\frac{1}{f^2}.
\end{equation}
That is, for small values of $f$,   $m=\Omega(f^{-2} \log n)$ gene trees
 are sufficient for ASTRAL* to return the correct
 species tree with an arbitrarily high probability. 
\end{remark}
Our proofs, shown in Appendix~\ref{sec:p-upper}, are based on the observation that quartet scores follow a 
multinomial distribution
and therefore,
for a large number of genes, we  expect the frequencies 
to
concentrate 
tightly around their means. 
We use 
 Hoeffding's inequality to get the required concentration.

\subsection{Lower bounds}

We now show that there exist  species trees for which $\Omega(f^{-2}\log n)$  gene trees are also necessary. 

\subsubsection{Quartet species trees ($n=4$)}
It is simpler to first analyze the case of $n=4$, i.e., reconstructing a single quartet.
Let $\mathcal{Y} = \{q_1,q_2,q_3\}$ be the set of three possible 
topologies, let $\{n_{11},n_{12},n_{13}\}$ be their frequencies out of $m$ gene trees and let $Q = \{p,r,r\}$ be the corresponding probability distribution on $\mathcal{Y}$ according to the MSC model. Let $E$ denote the event that ASTRAL* returns the wrong species tree. Observe that,
for $n=4$, the error event of ASTRAL* is equivalent to $n_{11} < \max\{n_{12},n_{13}\}$. 
Let $F$ denote the event $\{ n_{11} < m/3\}$, which requires $\max(n_{12},n_{13})>m/3$. ASTRAL* returns the tree associated with the highest frequency; thus, $P(E) \geq P(F)$.

 \begin{figure}[tbp]
 \includegraphics[width=.471\textwidth]{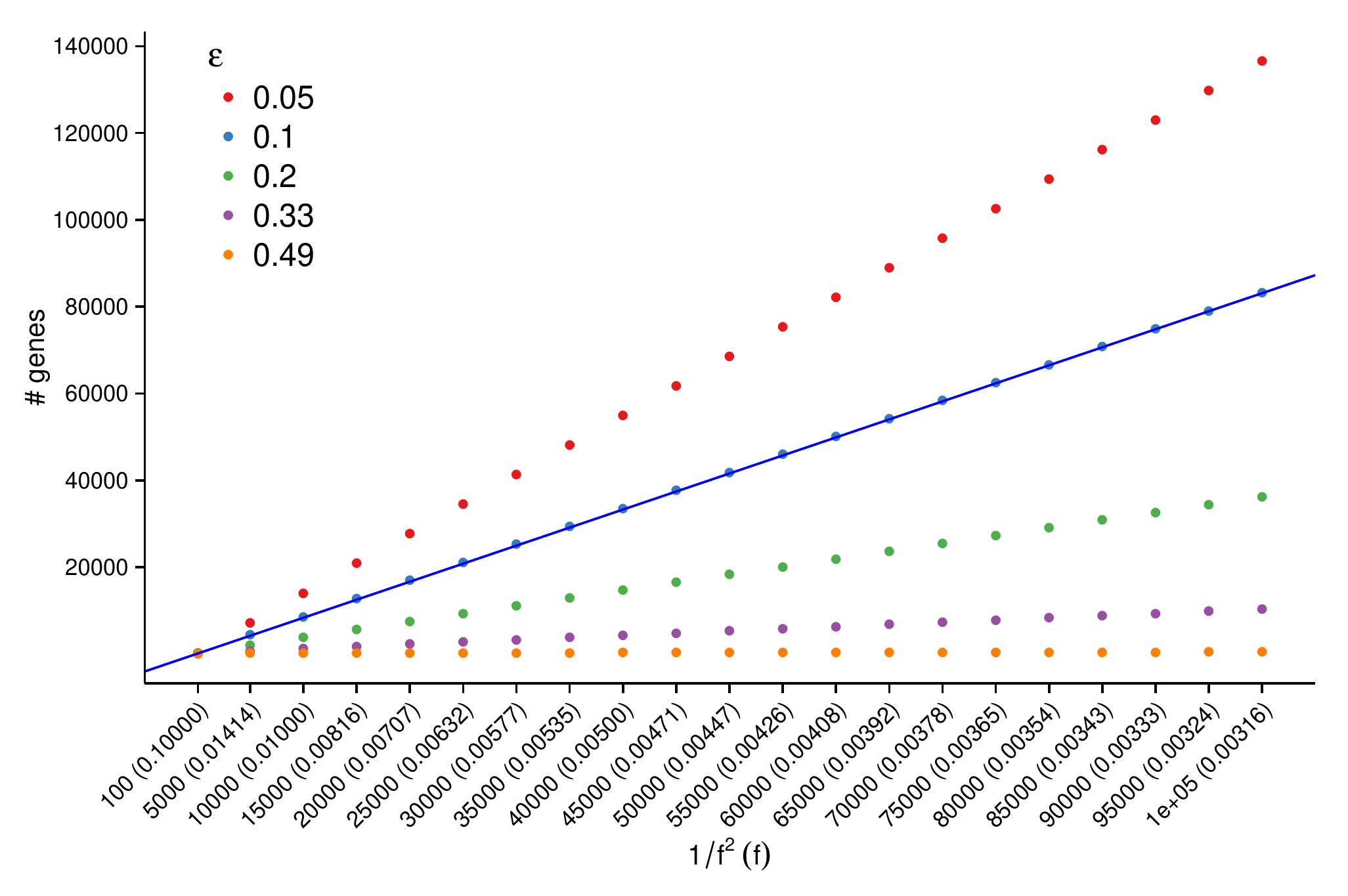}
 \caption{The number of genes below which a certain level of 
 error ($\epsilon$) is unavoidable according to the Binomial distribution. 
Values on the x axis show $1/f^2$ where $f$ is the branch length
 in coalescent units
 ($f$ is shown parenthetically). The y axis shows
 the largest value of $m$ where the CDF of a binomial
 with parameter $1-{2}/{3}e^{-f}$ evaluated
 at ${m/}{3}$ is above the 
 error level $\epsilon$. A line is fitted to points
 for each $\epsilon$ (shown for each line).\label{fig:binom}}
\end{figure}

Before proving the lower bounds, we first present a simple numerical experiment. 
Suppose we want to find the $m$ below which the error probability is at least $\epsilon$ for any $\epsilon \in [0,0.5)$. 
We know that $P(F)$ is a lower bound on $P(E)$, and that it has the following closed form expression
\begin{equation}
P(F) = \sum_{i=0}^{\lfloor m/3 \rfloor} {m \choose i}p^i(1-p)^{m-i},
\end{equation}
which is simply the Cumulative Distribution Function (CDF) of a binomial calculated at $\lfloor m/3 \rfloor$, where $p= 1 -\frac{2}{3}e^{-f}$. 
For given $f$ and $\epsilon$, we 
can
compute the value of $P(F)$ for various choices of $m$. 
In  Figure~\ref{fig:binom}, we show the largest $m$
for  which the computed $P(F)$ is larger than $\epsilon$.  
As $f^{-2}$ increases,
the required number of binomial trials 
grows  
almost linearly with $f^{-2}$, with a slope that 
increases with decreasing $\epsilon$,
as expected. 
This experiment suggests that for $m=\mathcal{O}(f^{-2})$, the reconstruction error for one quartet can be made arbitrarily close to 
$1/2$. Our next result formalizes this intuition. 

\begin{theorem}
\label{one_q}
For a species  tree with $n=4$ leaves, let $f$  be the length of the single internal branch in the unrooted species tree.
 Then, for any $\epsilon \in [0,0.5)$, there exists an $f_0 >0$ and  a constant $d_{\epsilon}$,  such that for all $m \leq d_{\epsilon}/f^2$ with $f \leq f_0$, the probability that ASTRAL* outputs the wrong quartet is at least $\epsilon$. \end{theorem}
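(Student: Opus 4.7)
The excerpt already supplies the key reduction: $P(E) \geq P(F) = P(n_{11} < m/3)$ with $n_{11} \sim \mathrm{Bin}(m, p)$ and $p = 1 - \tfrac{2}{3} e^{-f}$, so the task is to show $P(F)$ (or, when that lower bound is too weak, $P(E)$ itself) can be kept above $\epsilon$ by choosing $d_\epsilon$ small enough. The Gaussian heuristic is clean: because $p - 1/3 = \tfrac{2}{3}(1 - e^{-f})$ is of order $f$ while $\sqrt{p(1-p)}$ stays bounded below for $f$ small, the threshold $m/3$ lies about $\sqrt{m}(p-1/3)/\sqrt{p(1-p)} \approx \sqrt{2m}\,f$ standard deviations below the mean $mp$. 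Under the hypothesis $m \leq d_\epsilon/f^2$, this standardized gap is at most roughly $\sqrt{2 d_\epsilon}$, and $\Phi(-\sqrt{2 d_\epsilon}) \uparrow 1/2$ as $d_\epsilon \downarrow 0$, which is what allows us to exceed any target $\epsilon < 1/2$.

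To make the heuristic rigorous for moderately large $m$, I would apply the Berry--Esseen theorem to the standardized binomial $Z_m = (n_{11} - mp)/\sqrt{mp(1-p)}$, which gives $|P(Z_m \leq x) - \Phi(x)| \leq C \cdot K(p)/\sqrt{m}$ for a universal $C$ and $K(p) = (p^2 + (1-p)^2)/\sqrt{p(1-p)}$, a function that stays uniformly bounded in $f \in [0, f_0]$ since $p$ remains in a compact subset of $(1/3, 1)$. My plan is to first pick an integer $M_0$ so large that $C\,K(p)/\sqrt{M_0}$ is comfortably below $\tfrac{1}{2}(\tfrac{1}{2}-\epsilon)$, and then pick $d_\epsilon > 0$ so small that $\Phi(-\sqrt{2 d_\epsilon})$ exceeds $\epsilon + C\,K(p)/\sqrt{M_0}$ with room to spare for the Taylor remainder that controls $1 - e^{-f} - f$. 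This secures $P(F) \geq \epsilon$ for every $m$ in the range $M_0 \leq m \leq d_\epsilon/f^2$, once $f$ is small enough.

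For $m < M_0$ the bound $P(F) \geq \epsilon$ can easily fail (for instance $P(F) = (2/3)^3 = 8/27 < 1/2$ at $m = 3$, $f = 0$), so I would instead appeal directly to continuity of $P(E)$. At $f = 0$ the three multinomial parameters all equal $1/3$, so $(n_{11}, n_{12}, n_{13})$ is exchangeable and any symmetric tie-breaking rule yields $P(E) = 2/3 > 1/2 > \epsilon$ for every $m \geq 1$. Since $P(E)$ is continuous in $f$ for each fixed $m$, and only finitely many values of $m$ lie below $M_0$, taking the minimum of the individual thresholds yields a uniform $f_0 > 0$ such that $P(E) > \epsilon$ for every such $m$ whenever $f \leq f_0$. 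Further shrinking $f_0$ so that $d_\epsilon / f_0^2 \geq M_0$ stitches the two regimes together into the single claim of the theorem.

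The main obstacle is not any single step but the bookkeeping that reconciles three separate error budgets: the Gaussian margin $\Phi(-\sqrt{2 d_\epsilon}) - 1/2$, the Berry--Esseen discrepancy $C\,K(p)/\sqrt{m}$, and the Taylor remainder in replacing $p - 1/3$ by $\tfrac{2}{3}f$ (equivalently, in replacing $\sqrt{m}(p-1/3)/\sqrt{p(1-p)}$ by $\sqrt{2m}\,f$). Each must be controlled uniformly in $p$ over the range $f \in [0, f_0]$, so that their combined slack still leaves a definite gap above $\epsilon$. The small-$m$ case is subtle for a different reason: one cannot get away with the lower bound $P(F)$ and must pass back to $P(E)$ itself, exploiting the symmetry at $f = 0$ that was invisible in the single-coordinate Gaussian analysis.
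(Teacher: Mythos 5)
Your proof is correct, and the main regime is handled exactly as in the paper: lower-bound $P(E)$ by $P(F)=P(n_{11}\le m/3)$, apply Berry--Ess\'{e}en to the standardized binomial, note that the standardized threshold is at most of order $\sqrt{m}\,f\le\sqrt{d_\epsilon}$, and shrink $d_\epsilon$ until $\Phi$ of that quantity is close enough to $1/2$. Where you genuinely diverge is in the small-$m$ regime, and your split point is different: the paper's Case~1 covers all $m\le c_\epsilon/f$ (a range that grows as $f\to 0$) by a coalescent argument --- conditioning on the event $G$ that no lineages coalesce below the root, which has probability $e^{-mf}\ge e^{-c_\epsilon}$ and under which all three topologies are exchangeable, so $P(E)\ge \tfrac23 e^{-c_\epsilon}$; this choice also makes the Berry--Ess\'{e}en error in Case~2 vanish like $\sqrt{f}$ since there $m>c_\epsilon/f$. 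You instead fix a constant $M_0$ to absorb the Berry--Ess\'{e}en error as $CK(p)/\sqrt{M_0}$ and dispose of the finitely many $m<M_0$ by continuity of $P(E)$ in $f$ at $f=0$, where exchangeability gives $P(E)=2/3$ for every $m$. Both are valid. The paper's route is quantitative and yields an explicit $c_\epsilon=\log(2/(3\epsilon))$ and an interpretable statement (with fewer than $\Omega(1/f)$ genes a constant fraction of loci carry no signal at all); your route is more elementary and model-agnostic --- it needs only that the topology distribution degenerates to uniform at $f=0$ and that $P(E)$ is a polynomial in $p$ for fixed $m$ --- at the cost of a non-explicit $f_0$ obtained by taking a minimum over the finite set $\{1,\dots,M_0-1\}$. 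You also correctly flag the point the paper handles implicitly: for small $m$ the surrogate $P(F)$ is genuinely too weak (e.g.\ $P(F)=8/27$ at $m=3$, $f=0$), so one must return to $P(E)$ itself and invoke the three-way symmetry with random tie-breaking, exactly as the paper does inside its event $G$.
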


This theorem shows that with only four species and $m = \mathcal{O}(f^{-2})$ genes, the 
probability of error for ASTRAL* can be made arbitrarily close to 1/2. 
Thus, the lower bound and the upper
bound asymptotically match as $f$ decreases. 

The proof, shown in Appendix~\ref{sec:p-one_q},
essentially uses the Berry-Ess\'{e}en theorem
to first approximate the binomial with a 
Gaussian distribution and then shows
that for sufficiently small $f$,
 when the number of observations
is below $\mathcal{O}(f^{-2})$, the 
Gaussian has a sufficiently large probability 
of deviating from its mean enough 
to become below 1/3.

\subsubsection{Species trees of arbitrary size ($n>4$)}

If all quartets in a gene tree were generated independently, extending Theorem~~\ref{one_q} to more species would be trivial. However quartets have a complicated dependence structure. 
To establish a complement to our upper bound, 
it suffices to show that there exists a tree where $m=\mathcal{O}(f^{-2}\log n)$
gene trees would result in a high probability of error. 
We start with a simpler claim that establishes the lower bound with respect
to 
$f$---but not $n$.

\begin{claim}
\label{simple_1}
For any $n$ and $\epsilon \in [0,0.5)$, there exists a species tree with $n$ leaves and shortest
branch length $f$ such that when ASTRAL* is used with  $m \leq d'_{\epsilon}/f^2$ gene trees, for some constant $d'_{\epsilon}$, the probability that ASTRAL* reconstructs the wrong tree
is at least $\epsilon$. 
\end{claim}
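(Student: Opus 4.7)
The plan is to reduce the general-$n$ claim to the four-leaf case of Theorem~\ref{one_q} by planting a single hard quartet inside an otherwise easy species tree. Given $\epsilon \in [0, 1/2)$, I would first pick an inflated error level $\epsilon' = (\epsilon + 1/2)/2 \in (\epsilon, 1/2)$ and invoke Theorem~\ref{one_q} at $\epsilon'$ to obtain constants $d_{\epsilon'}$ and $f_0$; take $d'_\epsilon := d_{\epsilon'}$. The species tree $T$ I would construct consists of four distinguished leaves $a, b, c, d$ arranged as cherries $(a,b)$ and $(c,d)$ joined by an internal branch of length $f$, with the remaining $n-4$ leaves attached through a chain of internal branches all of length $L$, where $L = L(n, f, \epsilon)$ is chosen large below. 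Pendant branch lengths are irrelevant for the MSC gene-tree distribution, so they can be set to any value $\geq f$ to ensure $f$ is the minimum.

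By consistency of the MSC under taxon restriction, the quartet-topology counts $(N_1, N_2, N_3)$ on $\{a, b, c, d\}$ follow the same multinomial law as in the four-leaf model with internal branch $f$, so Theorem~\ref{one_q} yields $P(N_1 < \max(N_2, N_3)) \geq \epsilon'$ whenever $m \leq d'_\epsilon/f^2$ and $f \leq f_0$. Let $H$ be the event that every gene tree induces the species-tree topology on every quartet other than $\{a, b, c, d\}$. Each such quartet's middle-edge path in $T$ contains at least one branch of length $\geq L$, so a union bound gives
\begin{equation*}
P(H^c) \;\leq\; \binom{n}{4}\,m\,\tfrac{2}{3}\,e^{-L}.
\end{equation*}
Choosing $L$ logarithmic in $n$, $1/f$, and $1/(\epsilon'-\epsilon)$ (which still satisfies $L > f$ for $f$ small) makes $P(H^c) \leq \epsilon' - \epsilon$. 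On $H$, the quartet score of any candidate tree $T'$ equals $m \cdot |\{Q \neq \{a,b,c,d\} : T'|_Q = T|_Q\}| + N_{i(T')}$, where $i(T')$ indexes $T'$'s induced quartet on $\{a,b,c,d\}$. Hence the ASTRAL* optimum must agree with $T$ on every long branch and resolve $\{a,b,c,d\}$ by $\arg\max_i N_i$, so on $H \cap \{N_1 < \max(N_2, N_3)\}$ every ASTRAL* optimum differs from $T$, giving $P(\text{err}) \geq \epsilon' - (\epsilon' - \epsilon) = \epsilon$.

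The main obstacle is the quartet-counting step that establishes the previous claim: on $H$, every tree disagreeing with $T$ on even one long bipartition is strictly dominated by some ``correct-on-long-branches'' competitor. The accounting goes through because flipping a long bipartition forfeits the full $m$ gene trees on at least one quartet $Q \neq \{a,b,c,d\}$ (a loss of at least $m$ in the first term), while the most one can recoup by simultaneously switching the $\{a,b,c,d\}$ quartet is $\max_i N_i - N_{i(T')} \leq m$, leaving a net deficit of at least $\max_i N_i > 0$. Some minor care is also needed for tie-breaking to match the strict inequality in Theorem~\ref{one_q}'s error event, but this follows the same convention as in the four-leaf proof.
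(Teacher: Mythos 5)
Your reduction has a fatal combinatorial flaw: for $n\geq 5$ there is \emph{no} candidate tree that differs from $T$ only on the quartet $\{a,b,c,d\}$. Any tree that resolves $\{a,b,c,d\}$ as, say, $ac|bd$ must also resolve quartets such as $\{a,b,c,x\}$ (three special leaves plus an outsider $x$) differently from $T$; more generally, an NNI across an edge whose four subtrees have sizes $n_1,n_2,n_3,n_4$ flips exactly $n_1n_2n_3n_4$ quartets, which exceeds $1$ whenever $n>4$. Consequently, on your event $H$ the optimization does \emph{not} decouple into ``agree on the long bipartitions, then take the majority topology on $\{a,b,c,d\}$'': flipping the $\{a,b,c,d\}$ quartet forfeits $m$ on at least one other quartet that is unanimous for $T$ under $H$, while recouping at most $\max_i N_i - N_1 \leq m$, so $T$ weakly (and, once $N_1>0$, strictly) dominates every competitor. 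In other words, conditioning on $H$ makes ASTRAL* provably \emph{correct}, and the lower bound evaporates. A second, related problem is that $H$ cannot be made likely in your construction anyway: the quartets $\{a,b,c,x\}$, $\{a,b,d,x\}$, $\{a,c,d,x\}$, $\{b,c,d,x\}$ have middle paths contained in the two sub-edges whose lengths sum to $f$ (the other $n-4$ leaves must attach somewhere, necessarily subdividing or bypassing the ``internal branch of length $f$''), so these quartets are themselves noisy, with per-gene success probability close to $1/3$; the bound $P(H^c)\leq\binom{n}{4}m\frac{2}{3}e^{-L}$ is therefore false no matter how large $L$ is.

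The paper's proof sidesteps both issues with a different construction: it uses $(((a,b),c),\mathcal{X})$ with a long branch above $\mathcal{X}$, so that the only quartets distinguishing $T$ from the competitor $(((a,c),b),\mathcal{X})$ are the $n-3$ noisy quartets $\{a,b,c,x\}$, and it conditions on the event that all lineages of $X$ coalesce below the root, which makes those noisy quartets \emph{perfectly correlated} (so they can all fail together) rather than forcing the remaining quartets to be correct. That is the ingredient your argument is missing: the competitor must differ from the true tree only on quartets that are simultaneously unreliable, and the long branch must be used to correlate their failures, not to certify the rest of the tree.
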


The proof, detailed in Appendix~\ref{sec:p-simple_1},
simply uses trees of the form $(((u,v),w),\mathcal{X})$, where $\mathcal{X}$ is an arbitrary tree with $(n-3)$ leaves (represented by the set $X = [n-3]$) that is connected
to the root with a very long branch. 
By making this branch sufficiently long, 
we ensure that with high probability 
all quartets with leaves $\{x,u,v,w\}$ for $x\in X$
have the same frequencies in the gene 
trees; this observation enables us
to reduce
the problem to the case of $n=4$ and make use of
 Theorem~~\ref{one_q}.


By construction, our result in Claim~\ref{simple_1} does not
depend on $n$. To match the lower bound, we strengthen this result to an upper bound that depends on
both $f$ and $n$. 

\begin{theorem}\label{thm:upperfull}
For any $\rho \in (0,1)$ and $a \in (0,1)$, there exist constants $f_0$ and $n_0$ such that the
following holds. For all $n \geq n_0$ and $f \leq f_0$,
there exists a species tree with $n$ leaves  and shortest
branch length $f$ such that when ASTRAL* is used with $m \leq \frac{a}{5}\frac{\log n }{f^2}$ gene trees, the event $E$ that ASTRAL* reconstructs the wrong tree has probability
\begin{equation}
\label{eq:new_lower}
P(E) \geq 1-\rho. 
\end{equation}
\end{theorem}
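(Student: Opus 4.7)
The plan is to amplify the single-quartet lower bound of Theorem~\ref{one_q} across many nearly independent quartets packed into a single $n$-leaf species tree. Concretely, I would build a caterpillar-like backbone with $k=\lfloor n/4\rfloor$ internal nodes connected by backbone branches of length $L$ (with $L$ taken very large), and attach to each backbone node a four-leaf ``cherry'' whose single internal branch has length exactly $f$. The resulting tree has $n$ leaves, shortest branch $f$, and contains $k=\Theta(n)$ disjoint quartets, each of which is locally isomorphic to the quartet analyzed in Theorem~\ref{one_q}. This generalizes the construction of Claim~\ref{simple_1}, which embedded just one short quartet.

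The second ingredient is a sharpened, quantitative version of Theorem~\ref{one_q}. Its proof already reduces a single-quartet error event to $\{\mathrm{Bin}(m,p)<m/3\}$ with $p=1-(2/3)e^{-f}$, whose standardized deviation from the mean is of order $f\sqrt{m}$. Combining the Berry--Ess\'een approximation already used in Appendix~\ref{sec:p-one_q} with the Mills-ratio lower bound $\Phi(-x)\ge c_1 e^{-x^2/2}/x$ for $x\ge 1$ yields, for $f$ small enough, a single-quartet error probability of at least
\[
\frac{c_1\, e^{-c_2 m f^2}}{\sqrt{1+mf^2}},
\]
for absolute constants $c_1,c_2>0$. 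Substituting the hypothesis $m\le (a/5)\log n/f^2$ turns this into a per-quartet failure probability of at least $c_3\,n^{-a/5-o(1)}/\sqrt{\log n}$, which for $a<1$ is much larger than $1/n$.

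Third, I would put these ingredients together. ASTRAL* outputs the true species tree only if it resolves each of the $k$ short quartets correctly. In the limit $L\to\infty$ the gene-tree restrictions to the $k$ cherry regions are independent under the MSC, so
\[
P(E)\ \ge\ 1-\left(1-\frac{c_3}{n^{a/5+o(1)}\sqrt{\log n}}\right)^{\!k}\ \ge\ 1-\exp\!\left(-\frac{c_4\, n^{1-a/5-o(1)}}{\sqrt{\log n}}\right),
\]
which exceeds $1-\rho$ for all $n\ge n_0(\rho,a)$, because the assumption $a\in(0,1)$ ensures $1-a/5>0$. The small-$f$ hypothesis enters through the Berry--Ess\'een error, the linearization $1-e^{-f}\asymp f$ used in computing the standardized deviation, and the requirement that the Mills-ratio bound be in its valid range $x\ge 1$.

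The main obstacle is making the independence step honest for \emph{finite} backbone length $L$. I would condition on the event $\mathcal{A}$ that, in every gene tree, all four lineages within each cherry coalesce before reaching the backbone; on $\mathcal{A}$ the cherry subtrees are genuinely independent with the desired four-taxon marginal, and $P(\mathcal{A}^c)$ can be made arbitrarily small uniformly in $m$ by choosing $L$ large, so the lower bound on $P(E)$ loses only a negligible additive term. The other delicate point is keeping the constants in the Gaussian tail estimate explicit enough that the exponent $1-a/5$, and hence the factor $5$ in the denominator of the theorem statement, actually comes out of the computation rather than being a loose artifact of the bookkeeping.
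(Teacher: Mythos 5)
Your overall architecture is the same as the paper's: pack $\Theta(n)$ short quartets into one tree, make them (conditionally) independent via very long connecting branches, prove a per-quartet failure probability of the form $n^{-b}$ with $b<1$, and amplify $1-(1-n^{-b})^{\Theta(n)}\to 1$. The construction differs cosmetically (you attach four-leaf cherries to a backbone; the paper attaches $n/3$ triplets and pairs each with an outside leaf), and that difference is immaterial. But there is a genuine gap in your Step~2, the per-quartet tail bound. You propose to keep the Berry--Ess\'{e}en approximation from Appendix~\ref{sec:p-one_q} and sharpen only the Gaussian side with a Mills-ratio bound. Berry--Ess\'{e}en gives $P(F_i)\geq \Phi(y)-Cm^{-1/2}$ with an \emph{additive} error. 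At the relevant scale $m\asymp (a/5)f^{-2}\log n$ with $f\leq f_0$ a constant independent of $n$, the additive error is of order $f/\sqrt{\log n}$, i.e.\ only logarithmically small in $n$, while the Gaussian tail $\Phi(y)$ you are trying to extract is polynomially small, of order $n^{-\Theta(a)}/\sqrt{\log n}$. The error term therefore swamps the main term and the resulting lower bound is negative. (This is precisely why Berry--Ess\'{e}en suffices for Theorem~\ref{one_q}, where the target probability is a constant $\epsilon$, but not here; one is in a large-deviations regime $y\asymp\sqrt{m}\,f$, outside even the moderate-deviations range where the ratio $P(Y\leq y)/\Phi(y)$ is controlled.) The fix is to replace the additive normal approximation by a \emph{multiplicative} lower bound on the binomial lower tail --- a reverse Chernoff estimate. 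The paper does exactly this, combining the bound ${m\choose k}p^k(1-p)^{m-k}\geq \exp(-mD(k/m\|p))/\sqrt{8m(k/m)(1-k/m)}$ with $D(a\|b)\leq (a-b)^2/(b(1-b))$ and summing over $k\in[m/3-\sqrt{2m},\,m/3]$ to get $P(F_i\mid J_i)\geq n^{-a}$ directly; the constant $5$ in the hypothesis $m\leq\frac{a}{5}\frac{\log n}{f^2}$ is chosen to make that exponent come out to exactly $a<1$.

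A secondary point you gloss over: you assert that ASTRAL* is wrong as soon as one cherry's quartet is majority-wrong, but the error event requires exhibiting an alternative tree whose quartet score strictly exceeds that of the true tree, and flipping the resolution inside a cherry also changes the induced topology of every quartet with three leaves in that cherry and one outside. The paper handles this by conditioning on full coalescence of each triplet below the backbone, which forces all quartets containing a given triplet to have identical frequencies, so a single binomial controls the entire score comparison (as in the proof of Claim~\ref{simple_1}). Your conditioning event $\mathcal{A}$ gives you the same leverage, but the score-comparison argument needs to be written out rather than asserted.
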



\begin{remark}
The above result tells us that, for all $n$ large enough, there exists a species tree such that when ASTRAL* is used with $m \leq \frac{a\log n }{5f^2}$ gene trees, the probability of error is bounded away from zero. This lower bound can be made arbitrarily close to 1 (as $n$ goes to infinity). 
\end{remark}
The proof (Appendix~\ref{sec:p-upperfull}) uses ideas
similar to Claim~\ref{simple_1}; 
instead of using one  construct of the form $((u,v),w)$ matched
with $x\in \mathcal{X}$ (as we did for Claim~\ref{simple_1}), 
we use $n/3$ such constructs (each called a triplet). 
We show that each triplet, when matched
with one more leaf to build a quartet, 
has a probability of error of at least $\Omega(\frac{1}{n^a})$. We then argue that 
when branches above all triplets 
are long enough, error events for individual triplets can be made independent
conditioned on each triplet coalescing
in its root branch.
Finally, the conditional independence of $\Omega(n)$ error events and the 
$\Omega(\frac{1}{n^a})$ lower bounds 
are used to derive the bound.

\subsection{Summary of results}

We showed that the upper bound
on the required number of genes for ASTRAL* to reconstruct
the correct species tree with high probability grows as $f^{-2} \log n$. 
We further proved that there exist parts of the species
tree space where this upper bound is tight
in the sense that the probability of error can be bounded
strictly away from zero when the number of input gene trees 
is of order $f^{-2} \log n$. 
Thus, ASTRAL* requires
$\Omega(f^{-2} \log n)$ gene trees
to universally guarantee correct species tree reconstruction
with high probability.

\section{Simulations}
We now study the performance of ASTRAL-II and the NJst approach~\cite{njst} in simulations.
We seek to find 
the number of genes required by each method to recover the true unrooted species tree  with high probability.
We test ASTRAL-II, version 4.10.10~\cite{astral2}
and NJst as implemented in ASTRID~\cite{astrid}.

\subsection{Simulation procedures}
Testing sample complexity in simulations requires care. We seek the {\em minimum} number of genes with which methods stay under a specific level of error for specific model trees.

\subsubsection{Model species trees} 
We study three different tree forms, shown in Figure~\ref{Fig2}(a), all with exactly $n=8$ species and varying branch lengths (in coalescent units). 
The three forms we test are a fully unbalanced tree
where all internal branches  have length $f$, a fully balanced tree
where all internal unrooted branches have length $f$, and a balanced tree
where the two branches incident on the root have length 30 (i.e., very long) and the remaining branches have length $f$. We refer
to these trees as caterpillar, balanced, and double-quartet, respectively. 
We use nine values for $f$ in the range $[0.005,0.1]$, spaced so that $1/f^2$ is divided into equal chunks.

\begin{table}[!t]
\renewcommand{\arraystretch}{1.3}
\caption{Anomaly zone Conditions. For each each species tree topology and the smallest and the largest value of $f$, we
show the rank of the species tree topology
among the 10395 unrooted gene trees, its probability, and
the probability of the most likely gene tree other than the species tree. }
\label{tab:an}
\centering
\begin{footnotesize}
\begin{tabular}{|l|c||c|r|r|}
\hline
Tree & $f$ & ST rank & $p(S)\times 10^{4}$&$p(g^t)\times 10^{4}$\\
\hline\hline
\textbf{Caterpillar}&\textbf{0.1} & \textbf{50}& $8.708$& $13.671$\\
\textbf{Caterpillar}&\textbf{0.005}& \textbf{6028}&$0.698$& $2.420$\\ \hline
Balanced&0.1 & 1& $28.096$& $23.170$\\
Balanced&0.005& 1&$2.5694$& $2.5690$\\ \hline
Double-quartet&0.1 & 1& $247.843$& $143.215$\\
Double-quartet&0.005& 1&$128.457$& $124.679$ \\ \hline
\end{tabular}
\end{footnotesize}
\end{table}

The unrooted anomaly zone requires
at least two consecutive short branches
(often defined as below 0.1)~\cite{Degnan2013}.
Based on calculations carried by the {\tt hybrid-coal}
software~\cite{hybrid-coal},
 the caterpillar tree is in the anomaly zone, but the balanced tree and the double quartet
trees are not (Table~\ref{tab:an}).
More importantly, most of the branch
lengths we have used are {\em extremely} short and
go beyond what one might expect in most real biological datasets. 
With $f=0.1$, assuming a diploid population size of 100,000, the branch would represent 20,000 generations (or roughly 100,000 years assuming a generation time of 5 years). 
In our simulations with $f=0.1$, the percent of quartets that agree with each branch (computed by ASTRAL-II~\cite{localpp})
is never more than 46\%. 
Thus, $f=0.1$ represents short branches, but is within ranges observed in biological datasets. On the other end, using the same assumptions, $f=0.005$ corresponds to a branch of only 1,000 generations, which is perhaps unrealistically short, and close to the range where biologists would consider a hard polytomy.
In our simulations,
all branches except the one long internal branch in the double-quartet tree (note that in the unrooted version, the double-quartet tree has only one long branch) result in 
high levels of discordance. 
For $f<0.1$,
the quartet support for species
tree branches is very close to 1/3 (i.e., random) and is never more than 36\%. 

Thus, we emphasize that our choice of branch lengths is motivated by a desire to explore the ability to
reconstruct the species tree under extreme conditions and establishing trends in data requirement;
we refer the reader to earlier publications for simulations seeking to emulate realistic conditions~\cite{astral,astral2}.

For each species tree, gene trees
are generated according to the MSC model
using the Dendropy package~\cite{dendropy}.

\subsubsection{Number of replicates and $\epsilon$} 
We seek to find the smallest number of gene
trees, $m$, so that the probability of incorrect
tree reconstruction by each method is no more than $\epsilon=0.1$.
Computing the exact $\epsilon$ for a given $m$ in simulations requires an infinite number of replicates. In our simulations, for each species tree, we simulate 401 replicate datasets, with varying number of genes. 
To approximately achieve $\epsilon=0.1$, we search
for the smallest $m$ such that in no more
than 40 replicates each method outputs an incorrect species tree. Using normal approximation for binomial, 
observing 40 error events out of 401 replicates gives
us a 90\% confidence interval of $(0.075, 0.124)$ for $\epsilon$. 
Getting 90\% confidence intervals of $0.1\pm 0.01$ would require
close to 2,500 replicates, which is computationally infeasible given
the number of genes required per replicate.

\subsubsection{Binary search for $m$} 
Finding $m$
requires running ASTRAL-II and NJst with increasing numbers of genes
until reaching error levels below the specified threshold. Since our
choices of $f$ require tens of thousands of gene trees, this approach 
would become computationally impractical. Instead, we use a binary search
to find a tight window around the true value of $m$. The binary search starts with an initial range for $m$, picked as a guess by dividing results presented in Theorem~\ref{thm:upper} by 20. We set $m$ to the midpoint of the range and estimate $\epsilon$  by counting the number of replicates where each method outputs an incorrect tree. Based on the estimated $\epsilon$, we then decide to narrow down the range to the portion above or below the midpoint, and we repeat the process. We stop when the range becomes narrow enough (1/20 of our initial guess). We depict  final ranges and fit lines to their midpoints. Additionally, we check that upper and lower halves have been each chosen at least in one iteration of the binary search. This is to ensure that the true $m$ is within our initial range, and in occasions where this was not true, we expanded the initial range.

\subsection{Results} Results of our simulations are shown in Figure~\ref{Fig2}. 
As expected, decreasing $f$ results in
increased data requirements for both methods. 
The average number of genes required by ASTRAL-II ranges
from  206, 255, or 297 genes, respectively for the balanced, caterpillar, or double-quartet trees with $f=0.1$,
to $48948$, $59528$, and $93750$ genes with $f=0.005$. 
Matching our theoretical results, the number of  genes required by ASTRAL-II increased
proportionally to $1/f^{2}$. Similarly,
NJst seems to have data requirements
that scale with $1/f^{2}$ under these model conditions.
Interestingly, 
the tree shape had a substantial
and somewhat surprising
impact on the exact number of required genes.

\begin{figure}[tbp]
\begin{center}
\includegraphics[width=.43\textwidth]{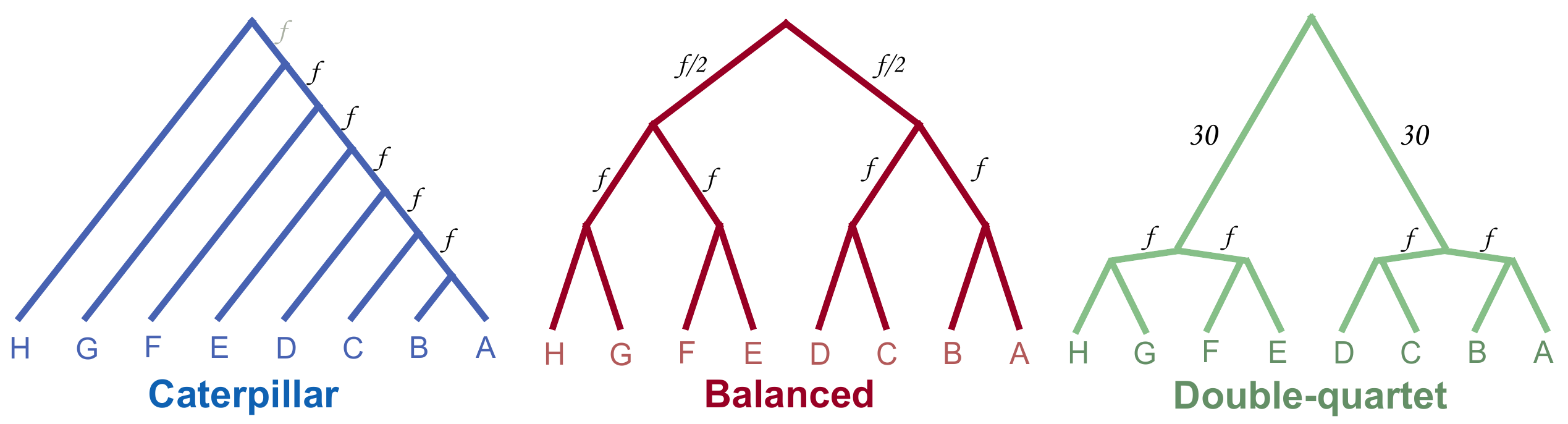}\\(a)\\
\includegraphics[width=.5\textwidth]{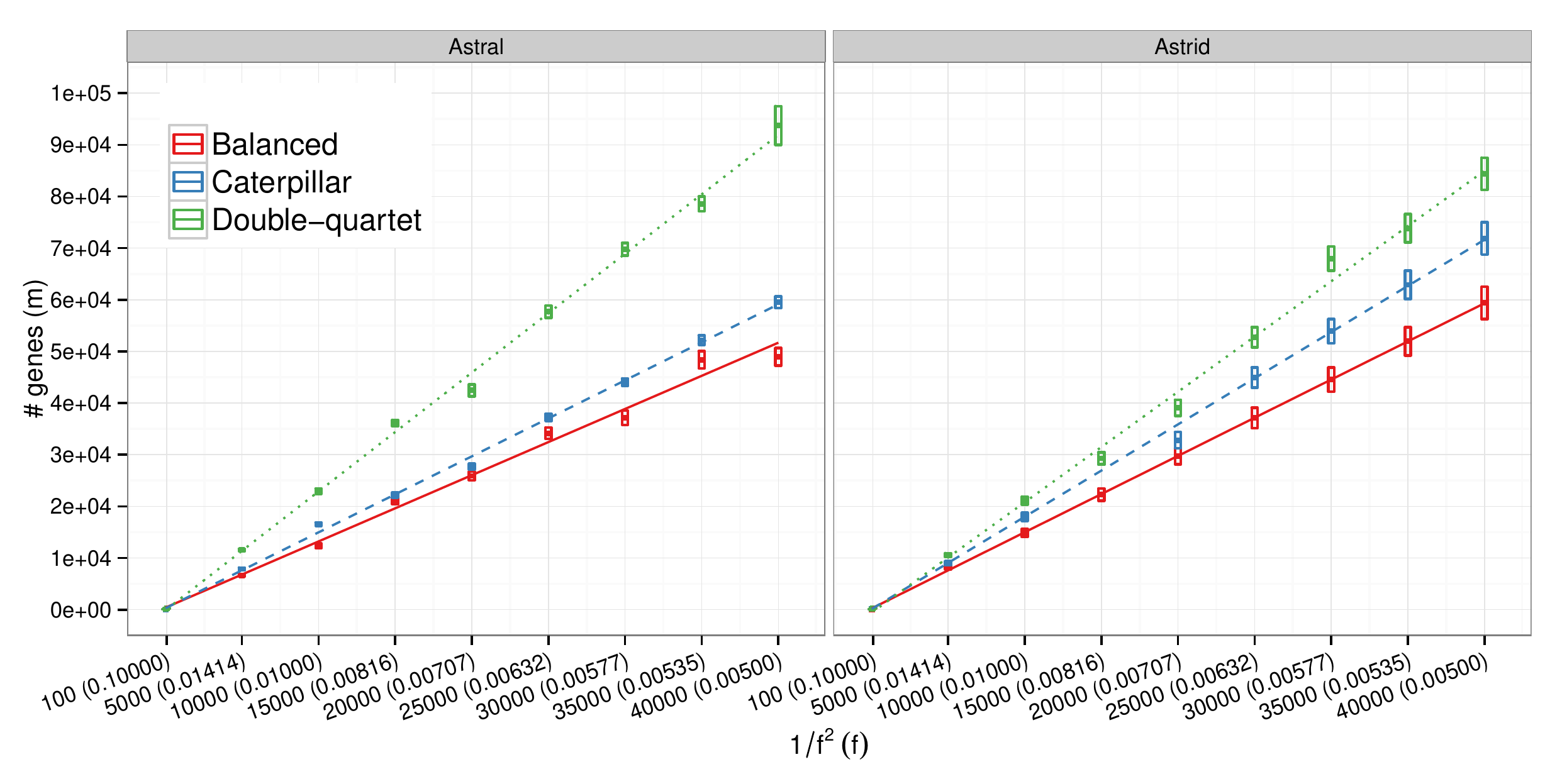}\\\vspace{-6pt}(b)\\
\includegraphics[width=.5\textwidth]{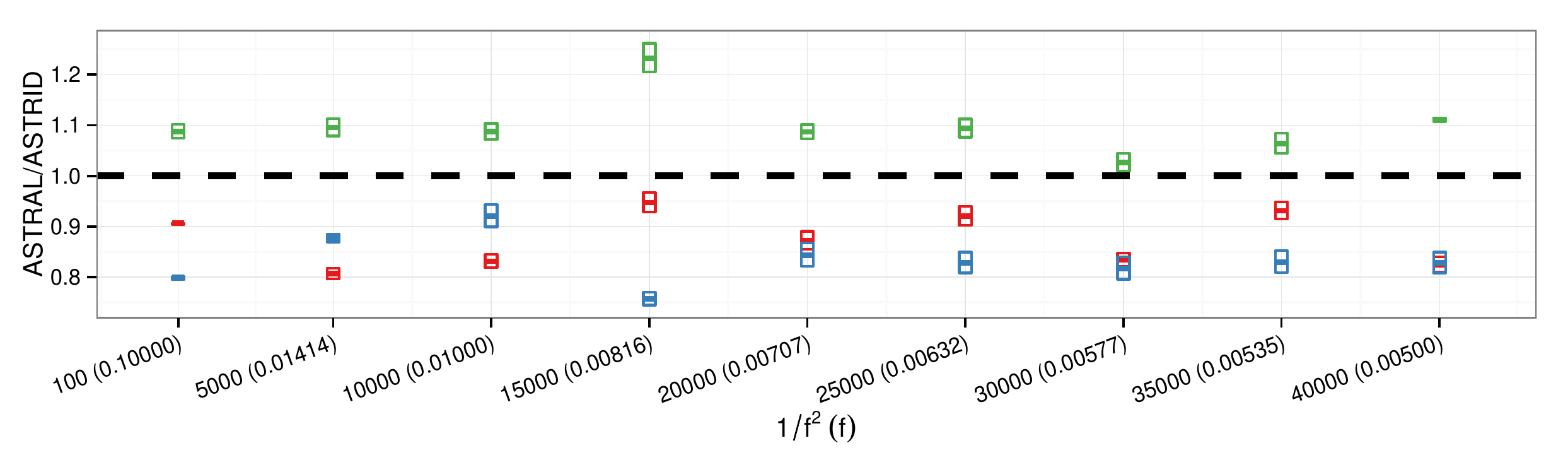}\\\vspace{-6pt}(c)
\end{center}
 \caption{Data requirement of ASTRAL-II and NJst (implemented in ASTRID) in simulations with $\epsilon=0.1$.
For each of the three different species tree shapes (panel a) and values of $f$, the length of the shortest branch, 
401 replicate datasets are simulated using
the MSC model, each with up to $10^5$ gene trees. 
(b) A binary search is used
to find an approximate range for the smallest number of genes with which ASTRAL-II (left) or NJst (right) recover the correct tree in at least 90\% of the 401 replicates. 
Boxes show these ranges and the horizontal lines
show the midpoint of each range.  A line
is fitted to midpoints for each tree shape.
(c) The ratio of the number of genes required by ASTRAL-II to the number required by NJst; below 1.0 indicates ASTRAL-II is better.
 }
 \label{Fig2}
\end{figure}

With both methods, the caterpillar tree required slightly more genes
than the balanced tree, a result
that we don't find surprising as we will discuss. However, counter-intuitively, when
the first two branches in the balanced tree were made extremely large (i.e., the double-quartet tree),
the data requirement for both methods went up, and the increase was more dramatic for ASTRAL-II. For example, 
for $f=0.00816$, the balanced tree requires
 21,096 gene trees,
while the double-quartet requires 36,115,
an increase of about 70\%.
Comparing the data requirement of ASTRAL-II and NJst
reveals that ASTRAL-II typically required between 10 to 20\% fewer genes for the caterpillar and balanced trees, while NJst required fewer genes for the double quartet tree. 

The results show that the data requirement
of ASTRAL-II and its relative performance depend on the tree shape as well as on the exact pattern of branch lengths. We provide possible explanations for the surprising results in the Discussion section.
The results also suggest that the sample complexity of NJst might be
also quadratic with respect to $1/f$.

\section{Discussion}

The results we presented heavily rely on properties of the MSC model, which assumes that genes evolve in an {\em i.i.d.} fashion. This assumption, which is perhaps biologically unrealistic, enables us to treat true gene trees as draws from a multinomial distribution over the discrete space of all tree topologies, and
hence provide a nice mathematical framework for studying sample complexity. The main remaining difficulty in analyzing the MSC process is the fact the tree space is combinatorial and hard to enumerate. Luckily, using quartets provides a way to address this challenge because there are only three possibilities for an unrooted tree topology, and moreover, the anomaly zone does not exist. The analysis of ASTRAL* sample complexity for quartets mostly reduces to analyzing the concentration of a multinomial distribution with three probabilities $(1-\frac{2}{3}e^{-f},\frac{1}{3}e^{-f},\frac{1}{3}e^{-f})$; thus, we can use the well-established Hoeffding concentration inequality and the standard union bound technique to deal with dependencies for upper bounds. In the case of lower-bounds, even though the quartet case reduces to analyzing the CDF of binomials (Fig.~\ref{fig:binom}), the binomial CDF function has a complex dependence on the parameter $m$, which motivates us to use the Berry-Ess\'{e}en theorem (which essentially is a stronger version of the Central Limit Theorem) and also more strong bounds specific to binomials. Extending lower bound results from quartets to arbitrary $n$ is not quite as easy as the upper bounds because simple methods like union bounds are not available. Instead, we exploit the properties of the MSC model to design species trees with sufficiently long branches, and this makes the quartets independent conditioned on events of an arbitrarily high probability. To summarize, the key to obtaining our results is to reduce large $n$ cases to $n=4$, and to then turn the problem into the analysis of binomial distributions defined by MSC model.


\subsection{Gene tree error}
Throughout the paper, we assumed a random sample of true gene trees generated by the MSC model is given. However, in practice, the 
gene trees are reconstructed from sequence data, and hence might contain errors. Moreover, processes
other than ILS may contribute to gene tree discordance. These violations of assumptions have the potential  to invalidate our results. One may suspect that when violations are not severe, many of our conclusions may remain intact. We can address this conjecture using a simplistic error model.

Assume that each gene tree quartet can be incorrectly estimated  with probability not exceeding some constant $e_0$, and also assume these events are independent across
different quartets. In the event of an error, each of the two alternative topologies is assumed
to be equally likely (i.e., has $1/2$ conditional probability). This model is not a realistic 
model of gene tree reconstruction error for a variety of reasons.
Quartets and their error events are not independent.
Moreover, the assumption of having unbiased
error is very strong, and does not necessarily follow from properties of 
tree reconstruction under any of the realistic
models of sequence evolution.  
Nevertheless, the following result provides insights about
robustness of our results. 

Let us denote by $Q_i$ the probability distribution according to the MSC model for the quartet $i$, 
i.e., $Q_i = \{p_i, r_i, r_i\}$ is the probability distribution on the set of the three possible topologies of the $i^{th}$ quartet. Then under the 
above error assumption, the observed quartets will have a probability distribution $Q_i' = \{p_i^{'},r_i^{'},r_i^{'}\}$,
with $p_i^{'} \geq p_i(1-e_0) + r_i e_0$, and $r_i^{'} \leq r_i(1-e_0/2) + p_ie_0/2$. As long as $p_i^{'} > r_i^{'}$, the upper 
bound for $m$ will have the same form as in Eq.\ref{eq:upper_bound}, but with larger constants. A sufficient condition for this
to happen is $p_i(1-e_0)+ r_ie_0> 1/3$ or $e_0 < 2/3$. Thus, with this particular error model, 
up to 66\% of gene trees can be incorrectly estimated and our sufficient conditions for correct reconstruction using ASTRAL* will not asymptotically change, indicating some level of robustness.

A rigorous analysis of the data requirement in the presence of gene tree error requires considering gene tree inference requirements. Since a sufficient condition for accurate gene tree inference is having genes with length that scales as $f^{-2}$~\cite{Erdos1999a}, one may expect that $m$ and the length of each gene both growing like $f^{-2}$ will be sufficient to bound error. This will result in total data requirement that scales with $f^{-4}$. 
The METAL~\cite{metal} method  that estimates species tree directly from sequence data has $f^{-2}$ overall data requirement
(using short constant-length genes), but it has remained theoretical and its performance on  data remains untested.

\subsection{ASTRAL-I and ASTRAL-II}
The constrained versions, ASTRAL-I and ASTRAL-II (see Table~\ref{tab:sum}), proceed similarly to ASTRAL*, with the added constraint that they restrict the species tree to draw its bipartitions from a fixed set ($X$). ASTRAL-I sets $X$ to the input set of gene trees, and ASTRAL-II further expands the set of allowable bipartitions using various heuristics. Clearly, any sufficient condition for ASTRAL-I to recover the true species tree will also be sufficient for ASTRAL-II, but necessary conditions for 
ASTRAL-II need not be as restrictive as ASTRAL-I. 

Analyzing the data requirement of ASTRAL-I requires answering the following question: how many genes are needed before every bipartition of the species tree has an arbitrarily high probability of appearing in at least one gene tree. To our knowledge, a recent result by Uricchio {\em et al.} is the only work on this ``bipartition cover'' problem~\cite{rosenberg2016}. Uricchio {\em et al.} give upper bounds on the required number of genes and, as our Appendix~\ref{sec:p-bpcover} shows, their upper bound correspond to an $m$ that increases as $\Theta(f^{-(n-3)})$. If that many genes were to be proved also {\em necessary}, we would conclude that ASTRAL-I and ASTRAL* have dramatically different data requirements. However, these analyses are very conservative and likely give only very loose upper bounds; the data requirement of the bipartition cover problem may prove to be much less. 
Tighter bounds for the bipartition cover problem 
are challenging to derive and need to be addressed in future work. 

\subsection{Discussion of simulation results}

In simulations, we found the species tree to be consequential for the required number of genes for both methods. 
Understanding the exact impact of tree shape would require
a more extensive study. Here we provide some conjectures. 

The fact that caterpillar trees required more genes compared to balanced trees is not due to increased discordance. For example, both trees had a quartet score of 44\% with $f=0.1$
or 34\% with $f=0.005$ (computed based on 2000 simulated gene trees), indicating identical levels
of gene tree discordance. However, differences in data requirements 
of ASTRAL-II may
be explained by the dependency structure of quartets. 
We say a quartet is defined around an internal branch of a tree if the middle path of the quartet corresponds exactly to that branch. 
Take the ``centroid'' branch in each tree that divides the leaves into two sets of size four.
In the caterpillar tree, there are nine quartets around the centroid branch, and all these nine quartets share at least two leaves (E and D). In the balanced tree, there are sixteen quartets around the centroid branch, and eight pairs of quartets share no leaves in common. Thus, in the caterpillar case, we have fewer quartets and quartet trees are highly dependent, providing little {\em additional} information with respect to each other. 
On the other hand, in the balanced tree,
we have more quartets and many of them share few or no taxa and so, intuitively, may be less dependent. Note that having less dependence in the quartets around a branch can potentially help ASTRAL-II because even if the binomial associated with one of the quartets happens to give a score below $m/3$ for the species tree resolution, the other quartets can potentially make up for it. In other words, having fully independent quartets around a branch is like having more genes as input; balanced trees have more quartets and less  dependence among quartets compared to caterpillar trees.

We also observed that the double-quartet tree requires many more genes than the balanced tree. This is counter-intuitive because the longer branch in the double-quartet tree reduces gene tree discordance. 
In our simulations, 
the model balanced tree had quartet
scores of 44\% and 34\% for $f=0.1$
and $f=0.005$ while the double-quartet tree had 
scores of 71\% and 68\% (computed based on 2000 simulated gene trees). 
Why does the double-quartet tree require more genes despite having dramatically less  discordance?

The pattern may be related to quartet dependencies. 
The long branch at the root of the double-quartet tree results in a high likelihood that all lineages coalesce before reaching the root. For example, take the parent of $(H,G)$; quartets
around that branch include $H$, $G$, $E$ or $F$, and one of $\{A,B,C,D\}$ (Fig.~\ref{Fig2}); all quartets that include $F$ will all have the same exact frequency (same for $E$), hence providing no additional information with respect to each other. Thus, the long branch at the top is causing the eight quartets around each of the four lower branches to have very redundant information; this redundancy is unhelpful, as we discussed earlier. 
Thus, counter-intuitively, reduced discordance at the basal branch may also be resulting in more uncertainty about resolutions of the branches closer to the tips. The double-quartet tree that looks easier in terms of gene tree discordance turns out to be more difficult for ASTRAL-II to resolve correctly. Our results do not imply that the long branch itself is hard to recover. It simply indicates that it may complicate the recovery of short branches. This pattern, if observed more broadly, would incidentally resemble the long branch attraction problem in sequence-based analyses. 
We note that the conditions where ASTRAL-II had surprisingly high data
requirements resemble the types of trees that we invoked to prove the lower bounds for
arbitrary $n$. 

Finally, our simulations start to shed light on differences between conditions where ASTRAL-II and NJst can each be expected to work better. In model conditions where all branches where short, ASTRAL-II outperformed NJst, especially with caterpillar trees. Both methods were sensitive to the presence of long branches, but ASTRAL-II more so. More broadly, our research indicates that a thorough examination of tree shapes for which ASTRAL and NJst perform well is required. Finally, the simulation results lead to the conjecture that NJst also requires $\mathcal{O}(f^{-2})$ gene trees; we hope future work will prove or disprove this conjecture.

\subsection{Practical implications}

Asymptotic results are not meant to provide
practitioners with means to predict how many gene trees
are needed. Two issues hamper such a goal. On the one hand, 
some of the parameters of the species tree (e.g., $f$) are not known in advance. On the other hand, the lower and upper bound values match only asymptotically and in practice, the effect of the unknown constants is important. 
Practitioners trying to decide the number of required
genes in advance may be only marginally helped by our results.
For example, given a large number of loci sampled from a few species (e.g., those with full genomes sequenced), they can empirically
estimate the number of required genes (e.g., judging accuracy
by examining local posterior probabilities generated by ASTRAL~\cite{localpp}). 
Then, to predict the number of required genes if they were
to sample more species, they could extrapolate using our results;
this would still need guessing the reduction in $f$ due to 
increased sampling and
such estimates will have to be taken as ballpark guesses.

Sample complexity results are useful to judge the effectiveness of methods.
The best possible sample complexity for the problem of inferring species trees from gene tree topologies is not known. If future
work proves $\mathcal{O}(f^{-2})$ to be the best possible,
our results will give practitioners reasons to have confidence in the efficiency of ASTRAL. On the other hand, if some other tool is
ever proved to have a better sample complexity, 
such asymptotic results should give practitioners {\em some} indication that alternative algorithms may be preferable. 
In either case, asymptotic results are not a replacement for
careful empirical and simulation studies and should be used as
a complementary source of information when judging methods.

\section*{Acknowledgements}

The authors thank Erfan Sayyari and anonymous reviewers
for helpful comments. The work was supported by 
National Science Foundation (NSF) grant IIS-1565862 to SM and SS
and NSF grant DMS-1149312 (CAREER) and NSF grant DMS-1614242 to SR.
Computations were performed on the San Diego Supercomputer Center (SDSC) and through the Extreme Science and Engineering Discovery Environment (XSEDE), supported by NSF grant number ACI-1053575.

\bibliographystyle{plain}
\bibliography{citations}

\appendix

\section{Proofs}
 
\subsection{Notation and definitions}
\label{subsec1}
We denote the number of species by $n$ and the number of gene trees by $m$. Let $k={n \choose 4}$ denote the total number of quadruples of leaves,
also known as quartets, in the species tree. We use $[k]$ to represent the set 
$\{1,2 \ldots ,k  \}$.

For each quartet $i \in [k]$, let $q_i$ represent the unrooted tree topology induced by the species tree for quartet $i$. There are three possible unrooted tree topologies that can be induced by  gene trees and we represent these three configurations by $q_{i1}, q_{i2}$, and $q_{i3}$. We choose to use $q_{i1}$ to denote the configuration of the species tree, i.e., $q_{i1} = q_i$. Also let $n_{ij}$, for $j=1,2,3$, represent the number of gene trees that induce the $q_{ij}$ topology. By definition, $\sum_{j=1}^3 n_{ij} = m$ for all $i \in [k]$. 

Under the MSC model, for each $q_i$, the probability that a gene tree has 
configuration $q_{i1}$ is $p_i = 1 -\frac{2}{3}e^{-d_i}$ where
$d_i$ is the length of the middle path in the quartet topology of the species tree measured in  coalescent units~\cite{Allman}.
The other two configurations have equal probability $r_i = (1-p_i)/2$.
Let us denote by $\delta_i$ the quantity $p_i - 1/3$ and let $\delta$ be the minimum over all $\delta_i$s. All quartets defined around the shortest branch in the species tree have $\delta_i=\delta$ and we use $f$ to denote the length of this shortest species tree branch. 

ASTRAL* returns the species tree that
maximizes the score $s(T) = \sum_{i=1}^{k} n_{i,t_i(T)}$ where $t_i(T)$ gives the index of the topology for quartet $i$ found in tree $T$.
Hence observe that
the score of the true species tree
is $\sum_{i=1}^{k} n_{i1}$.

\subsection{Proof of Theorem \ref{thm:upper}}
\label{sec:p-upper}
\noindent{\bf Theorem 2.1.}
{\em
	Consider a model species tree with minimum branch length $f< \log(\sqrt{2})$. Then, for any $\epsilon>0$, ASTRAL* returns the true species tree
	with probability at least $1-\epsilon$ if the number of input error-free gene trees satisfies 
	\begin{equation}
	\label{eq:upper_bound}
	m > \frac{9}{2}\log\bigg(\frac{4{n \choose 4}}{\epsilon}\bigg)\frac{1}{(1-e^{-f})^2}.
	\end{equation}
}

To prove this result, we require an upper bound on the probability of error in terms of $m$, $n$ and $f$. For that, we observe that a sufficient condition for ASTRAL* to return the correct species tree is that for all the quartets $i \in [k]$, the true topology is observed in a majority of the $m$ gene trees, i.e., $n_{i1}>\max\{n_{i2},n_{i3}\}$. Thus, for ASTRAL* to make an error, at least one quartet must have an alternate topology in a majority of the gene trees. We upper bound the probability of this event for one quartet by using Hoeffding's inequality, and then take a union bound over all quartets to bound the overall probability of error.

Let $w < \delta/2$.
We use $A_i$  to refer to
the event $n_{i1}/m \in (p_i-w,p_i+w)$, and use $B_i$ and $C_i$ to refer to events $n_{i2}/m\in(r_i-w,r_i+w)$ and $n_{i3}/m\in(r_i-w,r_i+w)$, respectively. 
\begin{lemma}\label{lemma:event}
For any $w < \delta/2$,
ASTRAL* is guaranteed to return the true
species tree if 
the event $D_i = A_i\cap(B_i \cup C_i)$ occurs for all $i \in [k]$.
\end{lemma}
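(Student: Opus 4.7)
The plan is to prove the lemma in two stages. First, I will show that when every event $D_i$ holds, the species-tree topology is the strict plurality among the three topologies in the gene trees for every quartet $i$, i.e., $n_{i1} > \max(n_{i2},n_{i3})$. Then, appealing to the definition of ASTRAL* as the maximizer of $s(T)=\sum_{i} n_{i,t_i(T)}$ and to the classical fact that a binary tree on $n$ leaves is uniquely determined by the set of quartet topologies it induces, I will conclude that the true species tree $T^\ast$ is the unique maximizer, and hence the output of ASTRAL*.

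The first stage is essentially arithmetic, using the MSC identity $p_i - r_i = \tfrac{3}{2}\delta_i$ (which follows from $p_i = 1-\tfrac{2}{3}e^{-d_i}$, $r_i = \tfrac{1}{3}e^{-d_i}$, $\delta_i = p_i - \tfrac{1}{3}$). Suppose $A_i \cap B_i$ holds. Then
\begin{equation*}
\frac{n_{i1}-n_{i2}}{m} > (p_i - w) - (r_i + w) = \tfrac{3}{2}\delta_i - 2w,
\end{equation*}
which is positive for $w < \tfrac{3}{4}\delta_i$. Because $B_i$ alone says nothing about $n_{i3}$, the key step is to exploit the linear constraint $n_{i1}+n_{i2}+n_{i3}=m$: it forces
\begin{equation*}
\frac{n_{i3}}{m} < 1 - (p_i - w) - (r_i - w) = r_i + 2w,
\end{equation*}
so $\tfrac{n_{i1}-n_{i3}}{m} > \tfrac{3}{2}\delta_i - 3w$, which is positive whenever $w < \tfrac{1}{2}\delta_i$. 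The hypothesis $w < \delta/2 \leq \delta_i/2$ delivers both inequalities simultaneously. The $A_i \cap C_i$ case is handled by the obvious symmetric argument.

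For the second stage, I will observe that the strict plurality at every quartet implies the following: any binary tree $T$ contributes $n_{i,t_i(T)}$ to $s(T)$, which equals $n_{i1}$ when $t_i(T)=1$ and is strictly less otherwise. Thus $s(T^\ast) \geq s(T)$ for every binary $T$, with equality only if $t_i(T) = t_i(T^\ast)$ for all $i$. By the quartet-determines-tree principle, this forces $T = T^\ast$, so ASTRAL* returns the true species tree.

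The only mild subtlety I anticipate is the asymmetric form of $D_i$, which requires $A_i$ together with only one of $B_i, C_i$ rather than both. The sum-to-$m$ constraint, used above, absorbs this asymmetry cleanly, so no further work is needed to finish the argument.
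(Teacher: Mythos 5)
Your proof is correct and follows essentially the same route as the paper: the key move in both is to use the constraint $n_{i1}+n_{i2}+n_{i3}=m$ to control the third frequency that $D_i$ does not directly bound, leading to the same threshold $w<\delta_i/2$. The only difference is that you also spell out why strict plurality at every quartet forces the true tree to be the unique maximizer of $s(T)$ (via the fact that a binary tree is determined by its induced quartets), a step the paper's proof leaves implicit.
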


\begin{proof}
Recall the definition of the score $s(T)$ and observe that
ASTRAL* returns the true species tree when the following condition holds:
$$n_{i1} > \max\{n_{i2},n_{i3}\}, \qquad \forall i\in [k].$$
We claim that, for all values of $w$ smaller than
$\delta/2$, we can ensure that $n_{i1}$ is the largest among $n_{ij}$, when the event $D_i$ occurs. 
Indeed, let $D_i$ hold and assume w.l.o.g.~that $B_i$ occurs. In the worst case, $n_{i1} = (p_i - w)m$ and $n_{i2} = (r_i-w)m$, which
results in $n_{i3} = (r_i+2w)m$ because
the sum of all three $n_{ij}$s is $m$.
In this case, we need
$r_i+2w=(1-p_i)/2 + 2w < p_i -w$, or after multiplying by $2$ and rearranging
$3p_i - 1 > 6w$.
This is true for all $w < \frac{1}{2}\delta_i = \frac{1}{2}(1/3 - p_i)$. 
\end{proof}

\begin{lemma}\label{lemma:upperprob}
Suppose the input to ASTRAL* is $m$ true gene trees generated under the MSC model. For every $\epsilon >0$, ASTRAL* reconstructs the correct species tree with probability at least $1-\epsilon$ if
 \begin{equation}\label{eq:lem2}
  m > 2\log\bigg( \frac{4{n \choose 4}}{\epsilon}\bigg)\frac{1}{\delta^2}.
 \end{equation}
\end{lemma}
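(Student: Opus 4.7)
The plan is to combine Lemma \ref{lemma:event} with Hoeffding's inequality and a union bound over quartets. By Lemma \ref{lemma:event}, for any $w<\delta/2$, ASTRAL* returns the correct species tree whenever the event $\bigcap_{i\in[k]} D_i$ occurs, where $D_i = A_i \cap (B_i \cup C_i)$. Therefore it suffices to upper bound $P\bigl(\bigcup_{i\in[k]} D_i^c\bigr)$ and to check that choosing $w$ arbitrarily close to $\delta/2$ recovers the announced constant $2/\delta^2$.

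First I would decompose $D_i^c$. Since $D_i = A_i \cap (B_i \cup C_i)$, we have $D_i^c = A_i^c \cup (B_i^c \cap C_i^c)$, and in particular $P(D_i^c) \le P(A_i^c) + P(B_i^c \cap C_i^c) \le P(A_i^c) + P(B_i^c)$. Next, observe that the counts $n_{i1}, n_{i2}, n_{i3}$ are the three coordinates of a multinomial $(m;p_i,r_i,r_i)$, so each marginal $n_{ij}$ is a binomial sum of $m$ independent Bernoulli indicators bounded in $[0,1]$. Hoeffding's inequality then yields
\begin{equation*}
P(A_i^c) \le 2\exp(-2mw^2), \qquad P(B_i^c) \le 2\exp(-2mw^2),
\end{equation*}
and hence $P(D_i^c) \le 4\exp(-2mw^2)$.

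A union bound over the $k=\binom{n}{4}$ quartets then gives
\begin{equation*}
P\bigl(\text{ASTRAL* errs}\bigr) \le 4\binom{n}{4}\exp(-2mw^2).
\end{equation*}
Forcing the right-hand side to be at most $\epsilon$ is equivalent to $m \ge \frac{1}{2w^2}\log\bigl(4\binom{n}{4}/\epsilon\bigr)$. Since this holds for every $w<\delta/2$, whenever $m > \frac{2}{\delta^2}\log\bigl(4\binom{n}{4}/\epsilon\bigr)$ we can pick $w$ strictly less than $\delta/2$ but close enough that the inequality is satisfied, yielding the claim.

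There is no real obstacle here; the only point requiring a small amount of care is the passage from the strict constraint $w<\delta/2$ in Lemma \ref{lemma:event} to the clean constant $2/\delta^2$ in the sample-complexity bound, which is handled by the strict inequality $m>\ldots$ in the statement. Everything else is a direct application of Hoeffding's inequality to the binomial marginals of the multinomial quartet distribution, followed by the standard union bound over all $\binom{n}{4}$ quartets.
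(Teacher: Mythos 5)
Your proof is correct and follows essentially the same route as the paper's: the same decomposition $D_i^c = A_i^c \cup (B_i^c \cap C_i^c)$ bounded by $P(A_i^c)+P(B_i^c)$, the same application of Hoeffding's inequality to the binomial marginals $n_{ij}$, the same union bound over the $\binom{n}{4}$ quartets, and the same limiting argument sending $w$ up to $\delta/2$. Nothing further is needed.
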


\begin{proof}
 If $E$ denotes the event that ASTRAL* returns the wrong species tree, then by Lemma~\ref{lemma:event} we have by a union bound
 \begin{align}
 \label{eq:err1}
  P(E) 
  &\leq P\big( \cup_{i=1}^{k} D_i^c\big)\\
   & = P( \cup_{i=1}^k A_i^c\cup(B_i^c\cap C_i^c)    )\\
   &\leq P\big( \cup_{i=1}^{k}(A_i^c\cup B_i^c)\big)\\ 
  &\leq \sum_{i=1}^k \left[P(A_i^c)+P(B_i^c)\right]
 \end{align}
Thus we need a bound on $P(A_i^c)$ and $P(B_i^c)$. Let us define a random variable $X_{ig}$, for $i \in [k]$ and $g \in [m]$, 
 which takes the value 1 if the configuration of the $i^{th}$ quartet in the $g^{th}$ gene tree is the same as $q_i$;
 and is zero otherwise. Then $(X_{ig})_{g=1}^m$ is a sequence of independent Bernoulli random variables with parameter $p_i$
 and $n_{i1}=\sum_{g=1}^{m} X_{ig}$. 
 The event $A_i^c = \{ n_{i1} \notin (m(p_i-w),m(p_i+w))\}$ with $w < \frac{\delta}{2}$ becomes increasingly unlikely as $m$ grows and its probability
  can be bounded using Hoeffding's inequality (see e.g.\cite{lugosi2004concentration}) to get
\begin{equation*}
    P(A_i^c) \leq 2e^{-2m w^2}.
\end{equation*}
The same bound holds for $P(B_i^c)$.
Plugging this into~\eqref{eq:err1}, we get
\begin{equation}\label{eq:mm}
P(E) \leq {n \choose 4}4e^{-2 m w^2}.
\end{equation}
To make the Right Hand Side (RHS) smaller than $\epsilon $, we take
\begin{align}  \label{eq:lem1}
m \geq \frac{1}{2 w^2}\log\bigg( \frac{4{n \choose 4}}{\epsilon}\bigg).
 \end{align}
The result follows from the fact that we can choose any $w < \delta/2$.\end{proof}


\begin{proof}[Proof of Theorem \ref{thm:upper}]
Lemma~\ref{lemma:upperprob} gives us the gene tree requirement in terms of the parameter $\delta$.
The result in Theorem~\ref{thm:upper} immediately follows by replacing $\delta$ with $\frac{2(1-e^{-f})}{3}$. In the limit of small $f$, we note further that $\frac{(1-e^{-f})^2}{f^2}$ increases to 1. This gives the $\mathcal{O}(f^{-2})$ dependence of the bound on $m$ as mentioned in the Remark after the statement of the theorem.\end{proof}

\subsection{Proof of Theorem
\ref{one_q}}
\label{sec:p-one_q}

\noindent{\bf Theorem 2.2.}
{\em For a species  tree with $n=4$ leaves, let $f$  be the length of the single internal branch in the unrooted species tree.
 Then, for any $\epsilon \in [0,0.5)$, there exists an $f_0 >0$ and  a constant $d_{\epsilon}$,  such that for all $m \leq d_{\epsilon}/f^2$ with $f \leq f_0$, the probability that ASTRAL* outputs the wrong quartet is at least $\epsilon$}. 

\begin{proof}

We divide the proof into two cases. First we show that the result holds for $m \leq c_{\epsilon}/f$, for some constant $c_{\epsilon}$. That follows from the fact that, in that case, a non-trivial fraction of genes coalesce only in the ancestral root population. We then show separately that the result also holds for $ c_{\epsilon}/f < m < d_{\epsilon}/f^2$, for some choice of $d_{\epsilon}$. There, we use the Berry-Ess\'{e}en theorem to control the probability that a majority of the genes return the wrong topology.

\paragraph*{Case 1 ($m \leq c_{\epsilon}/f$)} Let
$G$ be the event that, in all gene trees, all the lineages reach the root population without any prior coalescence.
Depending on the rooting, the quartet species tree can either have the symmetric topology or the asymmetric topology. In both cases we have 
\begin{align}\label{eq:prob-of-g}
P(G) = (e^{-f})^m = e^{-mf}.
\end{align}
(Recall that $f$ represents the internal branch length in the unrooted tree. Thus it equals the length of the single internal branch in the asymmetric case and it equals the sum of the two internal branch lengths in the symmetric case.)

Under $G$, for every gene tree, all three topologies are equally likely. So $n_{11}, n_{12}$ and $n_{13}$ have the same distribution. (There is only one quartet in this case.) Since ASTRAL* selects the output tree randomly in case of a tie, by symmetry we can conclude that $P(E|G) = 2/3$ and hence, by~\eqref{eq:prob-of-g}, 
\begin{equation*}
    P(E) \geq P(E|G) P(G) \geq \frac{2}{3}e^{-c_{\epsilon}}.
\end{equation*}
For all $\epsilon \in [0,0.5)$, the choice $c_{\epsilon} = \log(\frac{2}{3\epsilon})$ ensures finally that $P(E) \geq \epsilon$. 

\paragraph*{Case 2 ($ c_{\epsilon}/f < m < d_{\epsilon}/f^2$)} 
Recall that, when $n=4$, we let $F$ be the event that $\{n_{11} \leq m/3\}$. We write
\begin{equation*}
P(F)
=P(n_{11} \leq  m/3) 
=P( Y \leq y)
\end{equation*}
where $Y = \frac{n_{11}-pm}{\sqrt{p(1-p)m}}$ and $y = \frac{-m(p-1/3)}{\sqrt{p(1-p)m}}$. Then, by the Berry-Ess\'{e}en theorem (as stated e.g.~in~\cite{Durrett}), we have
\begin{align*}
 \bigg|P(Y\leq y) - \Phi(y)\bigg| &\leq \frac{cp(1-p)[p^2 + (1-p)^2]}{[p(1-p)]^{3/2}m^{1/2}}\\
&\leq \frac{cp}{[p(1-p)]^{3/2}m^{1/2}},
\end{align*}
where $\Phi(\cdot)$ denotes the normal CDF and $c$ is a universal constant (which can be taken to be $3$). 
Thus
\begin{align}
 P(F) =P( Y \leq y) \geq \Phi(y) - \frac{cp}{[p(1-p)]^{3/2}m^{1/2}}.\label{eq:berry-esseen}
\end{align}

It remains to show that the RHS above is bounded away from $0$ when $f$ is small. We estimate the two terms separately.
We know that as $f\to 0$, $p \to 1/3$ and $p(1-p) \to 2/9$, and $p(1-p)$ is monotonic for $0<p<1/2$. Thus, there exists an $f_1$ such that $p \leq 10/27$ and $p(1-p) \geq 1/9$ whenever $ f \leq f_1$. So, for all $f \leq f_1$,
\begin{align*}
 y = \frac{m^{1/2}(p - 1/3)}{(p(1-p))^{1/2}} & \leq  3m^{1/2}(p - 1/3),
\end{align*}
and
\begin{align*}
    \frac{cp}{ (p(1-p))^{3/2}m^{1/2}} & \leq \frac{10c }{m^{1/2}}.
\end{align*}
Thus we get the following relation 
\begin{equation}
\label{eq:eq_1}
P(F) \geq \Phi \big(-3m^{1/2}(p - 1/3)\big) - \frac{10c }{m^{1/2}}.
\end{equation}
Let us first consider the second term on the RHS. For $m > c_{\epsilon}/f$,
\begin{equation*}
    \frac{10c}{m^{1/2}} < \frac{10cf^{1/2}}{c_{\epsilon}^{1/2}}
\end{equation*}
Let $\bar{f}_2$ be such that $\frac{10c\sqrt{\bar{f}_2}}{\sqrt{c_{\epsilon}}} = \frac{\gamma}{2}$, for $\gamma = 1/2 - \epsilon$. For all $f \leq f_2= \text{min}\{\bar{f}_2, f_1\}$ and $m > c_{\epsilon}/f$, we have $10c/\sqrt{m} < \gamma/2$. 

As for the first term on the RHS of~\eqref{eq:eq_1}, it is decreasing in $m$ for a fixed $f$. Thus for any $f \leq f_2$, if we find a value of $m$ as a function of $f$, say $m_0(f)$, which is larger than $c_{\epsilon}/f$ and for which this first term is no smaller than $1/2 - \gamma/2$, then the required result also holds for all $m \in (c_{\epsilon}/f,m_0(f)]$. 

Let us set $m_0(f) = d_{\epsilon}/f^2$, where $d_{\epsilon}$ will be determined below. Then we have
\begin{align}
\label{berry_esseen1}
    P(F) &\geq \Phi \left(-3\sqrt{d_{\epsilon}}\left(\frac{2}{3}\left( \frac{1-e^{-f}}{f}\right)\right)\right) - \frac{\gamma}{2} \\
        & = \Phi \left(-2\sqrt{d_{\epsilon}}\right) - \frac{\gamma}{2},
\end{align}
where we used that for $f >0$, $\frac{1-e^{-f}}{f} \leq 1$. Now, if we set 
\begin{equation}
d_{\epsilon} = \bigg(\frac{\Phi^{-1}\left(1/2 - \gamma/2\right)}{2}\bigg)^2
\end{equation}
we get that, for all $f \leq f_2$ and $m \in (c_{\epsilon}/f, d_{\epsilon}/f^2]$, $P(E) \geq 1/2 - \gamma = \epsilon$. Finally, we define $f_0 = \text{min}\{f_2, d_{\epsilon}/c_{\epsilon} \}$ to ensure that the above interval is non-empty. 
\end{proof}

\subsection{Proof of Claim~\ref{simple_1}}
\label{sec:p-simple_1}
\noindent{\bf Claim~\ref{simple_1}.}
{\em
For any $n$ and $\epsilon \in [0,0.5)$, there exists a species tree with $n$ leaves and shortest
branch length $f$ such that when ASTRAL* is used with  $m \leq d'_{\epsilon}/f^2$ gene trees, for some constant $d'_{\epsilon}$, the probability that ASTRAL* reconstructs the wrong tree
is at least $\epsilon$. 
}

\begin{proof}

The idea is to reduce the proof to the case of Theorem~\ref{one_q} with an appropriate choice of species tree.
We consider a species tree with the topology $(((a,b),c),\mathcal{X})$, where $\mathcal{X}$ is an arbitrary tree with $(n-3)$ leaves, which we denote by $X = [n-3]$ for convenience. We denote the edge above $(a,b)$
by $e$ and set its length to $f$. We also set the length of the the branch above $\mathcal{X}$ to $g$, a large value which we will specify later.

For $x\in X$, let
$n_{x1}, n_{x2}, n_{x3}$ be the frequencies 
of the quartet trees congruent with
the unrooted topology of $(((a,b),c),x)$,
$(((a,c),b),x)$, and $(((a,x),b),c)$ respectively.
When $n_{x1} < m/3$ for all $x \in X$,
ASTRAL* cannot output the true tree.
To see this, note that, under that condition, $\sum_{x\in X} n_{x1}<m(n-3)/3$; thus, $\max\{\sum_{x \in X} n_{x2},\sum_{x \in X} n_{x3}\}>m(n-3)/3$. W.l.o.g.~assume that $\sum_{x \in X} n_{x2}>m(n-3)/3$.
Then a tree
that includes the branch $e$ cannot have the maximal score because the unrooted topology of $(((a,c),b),\mathcal{X})$ will have a higher score. Indeed, observe that all quartets not of the form $\{a,b,c,x\}$ for some $x \in X$ contribute equally to the unrooted topologies of $(((a,b),c),\mathcal{X})$ and $(((a,c),b),\mathcal{X})$.

Let $H_i$ denote the event that, in gene $i$, all lineages from $X$ coalesce before reaching the root branch (i.e., within or below the long branch with length $g$) and let $H = \cap_{i=1}^m H_i$. Under the event $H$,  $n_{xj}=n_{yj}$ for all $x, y\in X$ and all $j \in \{1,2,3\}$; hence $P(\cap_{x \in X} n_{xj} <  m/3|H)
=P(n_{1j} <  m/3|H)$.
Thus, $\{n_{1j} <  m/3\}$ is sufficient to produce the wrong topology. We now bound the probability of that event.

Let us define 
$$
\epsilon_1 
= \frac{1}{4} + \frac{1}{2} \epsilon,
= \epsilon + \frac{1}{2}\left(\frac{1}{2}-\epsilon\right) > \epsilon
$$
and 
$$
\eta = 1-\epsilon/\epsilon_1.
$$
From Theorem~\ref{one_q}, we know that there exists an $f_0$ such that, for all $f \leq f_0$, if $m \leq d_{\epsilon_1}/f^2$, then $P(n_{1j} <  m/3|H)\geq \epsilon_1$. Let 
$$
m_0 = m_0(f) =  d_{\epsilon_1}/f^2,
$$
as defined in the proof of Theorem~\ref{one_q}. If $E$ denotes the event that ASTRAL* produces the wrong output, then we have 
\begin{equation}
\label{eq:claim_eq1}
P(E) \geq P(n_{1j} <  m/3|H) P(H)\geq P(H)\epsilon_1.
\end{equation}
Proving a lower bound of $1-\eta$ for $P(H)$ then implies the result, by our choice of $\eta$. For this, we choose appropriately the value of $g$. Specifically, we pick $g$ large enough to ensure that $P(H_i) \geq 1-\eta_1$ for $\eta_1 = 1-(1-\eta)^{1/m_0}$. Due to the independence of the genes, we get that for all $m \leq m_0$
\begin{equation}
P(H) \geq (1-\eta_1)^m \geq (1-\eta_1)^{m_0} = 1-\eta. 
\end{equation}
Combining this with~\eqref{eq:claim_eq1}, we get that $P(E) \geq \epsilon$. Thus, we have shown the existence of a tree with $n$ leaves for which ASTRAL* is wrong with probability at least $\epsilon \in [0,0.5)$ if the number of input gene trees is less than or equal to $d'_{\epsilon}/f^2$, where $d'_{\epsilon} = d_{\epsilon + 1/2(1/2-\epsilon)}$.\end{proof}

\subsection{Proof of Theorem~\ref{thm:upperfull}}\label{sec:p-upperfull}
\noindent {\bf Theorem~\ref{thm:upperfull}.} {\em
For any $\rho \in (0,1)$ and $a \in (0,1)$, there exist constants $f_0$ and $n_0$ such that the
following holds. For all $n \geq n_0$ and $f \leq f_0$,
there exists a species tree with $n$ leaves  and shortest
branch length $f$ such that when ASTRAL* is used with $m \leq \frac{a}{5}\frac{\log n }{f^2}$ gene trees, the event $E$ that ASTRAL* reconstructs the wrong tree has probability
\begin{equation}
\label{eq:new_lower}
P(E) \geq 1-\rho. 
\end{equation}
}

\begin{proof}
To simplify the presentation, we assume that $n$ is divisible by three. (It is straightforward to generalize the argument.)
Consider a species tree constructed as follows.  
Make $n/3$ rooted triplet trees all of them with an internal
branch of length $f$ and arbitrarily index the triplet trees by $i \in [n/3]$. Then, pick any arbitrary tree with  $n/3$ leaves labeled by $[n/3]$
and all branch lengths equal to exactly $g$. 
Finally, connect each leaf of this arbitrary tree with the root of
the corresponding triplet tree. 
Let $J_{i}$ denote the event that, in all gene trees, all three lineages of the triplet $i$ coalesce in its root branch. Let $J = \cap_{i=1}^{n/3}J_{i}$.
We proceed in two steps:
\begin{enumerate}
\item First we introduce a result analogous to Theorem~\ref{one_q} with
an extra $\log n$ factor in the number of genes. This gives us a smaller probability of error of the form $n^{-a}$ for some $0<a<1$ on a single quartet.

\item Then we consider the species tree above in which there are $\Omega(n)$ chances for a quartet-based error in ASTRAL* to occur and show that at least one of these happens with sufficiently large probability. 
\end{enumerate}

\paragraph*{Step 1} Let us consider one of the  triplets above $i\in[n/3]$, with a topology $((u,v),w)$, as well as a leaf $x$ outside of the triplet. Let $n_{i,1}$ be the number of genes with the correct species tree topology and let $F_i$ denote the event that $\{n_{i,1} \leq m/3\}$. For reasons that will be made clear in Step 2, we are interested in $P(F_i|J_i)$. Conditioned on $J_i$, let $p'_g$ and $r'_g = (1/2)(1-p'_g)$ be the probabilities of observing $((u,v),(w,x))$ and $((u,w),(v,x))$ or $((v,w),(u,x))$ respectively in the unrooted gene trees. As $g \to +\infty$, $p'_g \downarrow 1 - (2/3)e^{-f}$. Let $g_0$ be large enough (as a function of $f$) so that, for all $g \geq g_0$, $p'_g - 1/3 \leq (2.1/3)(1-e^{-f})$. Assume from now on that $g \geq g_0$.

To deal with the extra $\log n $ factor in the number of genes, we revisit~\eqref{eq:berry-esseen}. This time, instead of Berry-Ess\'{e}en, we use a tailored estimate on the binomial.
We will need the following bounds.
The Kullback-Liebler divergence between two Bernoulli random variables satisfies
\begin{align*}
D(a\|b) 
&= a\log \frac{a}{b} + (1-a)\log \frac{1-a}{1-b}\\
&\leq a\left(\frac{a}{b}-1\right) 
+(1-a)\left(\frac{1-a}{1-b}-1\right) \\
&= \frac{(a-b)^2}{b(1-b)},
\end{align*}
where we used
the fact that $\log x \leq x-1$.
By~\cite[Lemma 4.7.1]{Ash},
we also have 
$$
{m \choose k}p^k(1-p)^{m-k}
\geq \frac{\exp\left(-m D\left(k/m\middle\|p\right)\right)}{\sqrt{8 m (k/m) (1-k/m)}}.
$$

Using these two bounds we get
\begin{align*}
P(F_i|J_i) 
&= \sum_{k=0}^{m/3}{m \choose k}(p'_g)^k(1-p'_g)^{m-k} \\
&\geq \sum_{k=m/3 - \sqrt{2m}}^{m/3}{m \choose k}(p'_g)^k(1-p'_g)^{m-k} \\
&\geq \sum_{k=m/3 - \sqrt{2m}}^{m/3}
\frac{\exp\left(-m D\left(k/m\middle\|p'_g\right)\right)}{\sqrt{8 m (k/m) (1-k/m)}} \\
&\geq \sum_{k=m/3 - \sqrt{2m}}^{m/3}
\frac{1}{\sqrt{2m}}\exp\left(-m D\left(k/m\middle\|p'_g\right)\right) \\
&\geq \sum_{k=m/3 - \sqrt{2m}}^{m/3}
\frac{1}{\sqrt{2m}}\exp\left(-m\frac{(p'_g - k/m)^2}{p'_g(1-p'_g)} \right)\\
&\geq \exp\left(-m\frac{(p'_g - 1/3 + \sqrt{2/m})^2}{p'_g(1-p'_g)} \right),
\end{align*}
where we used that $p'_g > 1/3$ on the last line.
Above we assumed, to simplify the notation, that $\sqrt{2m}$ is an integer. Using $p'_g - 1/3 \leq (2.1/3)(1-e^{-f})$, $1-e^{-f} \leq f$ 
and $m \leq m_1(f) = (a/5) f^{-2} \log n$, we get
\begin{align*}
&P(F_i|J_i)\\
&\ \geq \exp\left(-\frac{(a/5) \log n}{f^2} \cdot \frac{\{2.1 f/3 + f/\sqrt{(a/10) \log n}\}^2}{p_g'(1-p_g')} \right).
\end{align*}
Choose constants $f_0$ and $n_0$ depending on $a$ ensuring that $f \leq f_0$ and $n \geq n_0$ imply that $p_g'(1-p_g') \geq 1/9$ and that the expression in curly brackets in the previous display is less than $f\sqrt{5/9}$.
Then
\begin{align}
\label{eq:nbet}
P(F_i|J_i)
&\geq
n^{-a}.
\end{align}

\paragraph*{Step 2}
For each $i \in [n/3]$, let $E_{i}$ denote the event that all quartets containing the triplet $i$ have an alternative topology (i.e., different from the species tree)
in the majority. We note that, when $J_{i}$ occurs, then $F_{i}$ implies $E_{i}$.
If $E$ denotes the event that ASTRAL* outputs a wrong tree, then a sufficient condition for an error to occur is that at least one of the $E_{i}$ events occurs. 
Under the MSC, the events $\{F_{i} \cap J_{i} : i \in [n/3]\}$ are independent. 
Thus,
\begin{align}
P(E) 
&\geq 1 - \prod_{i \in [n/3]}[1- P(F_{i} \cap J_{i})]\nonumber\\ 
&\geq  1 - \prod_{i \in [n/3]}[1- n^{-a} P(J_{i})],\label{eq:e-aux1}
\end{align}
by~\eqref{eq:nbet}.
It remains to bound $P(J_{i})$.

Let $1-\rho_2$ be the probability that the full coalescence for triplet $i$ occurs in its root branch in one gene tree. Then $P(J_{i}) = (1-\rho_2)^{m} \geq (1-\rho_2)^{m_1(f)} $ for $m \leq m_1(f)$. Thus, for any $\rho_1 \in (0,1)$, we can choose $g_1$ large enough to ensure that $g \geq g_1$ implies $\rho_2 \leq 1-(1-\rho_1)^{1/m_1(f)}$ and therefore $P(J_{i})\geq 1- \rho_1$. Assume from now on that $g \geq \max\{g_0, g_1\}$.

Going back to~\eqref{eq:e-aux1}, using the fact that $(1-\frac{1}{x})^x \leq e^{-1}$, we have 
\begin{align*}
P(E) 
&\geq  1 - [1- n^{-a} (1-\rho_1)]^{n/3}\\
&\geq  1 - \exp\left(
- \frac{(1-\rho_1) n^{1-a}}{3}
\right)\\
&\geq 1-\rho,
\end{align*}
for $n \geq n_1$, for an appropriate $n_1$ depending on $a$.

\end{proof}

\section{Bipartition cover bounds}\label{sec:p-bpcover}

A set of gene trees is called a \emph{bipartition cover} of a species tree $S$ if the set of bipartitions included in the gene trees includes all the bipartitions of $S$. 
In \cite{rosenberg2016}, the authors show that input gene trees form a bipartition cover with probability at least $q$ if 
\begin{equation*}
m \geq M_{n,q} = \frac{\log((1-q)(n-3))}{\log(1-g_{n-2,1}(f))} =\frac{ C_{n,q}}{-\log(1-g_{n-2,1}(f))},
\end{equation*}
where $g_{n-2,1}(f)$ is the probability that $n-2$ lineages coalesce to 1 within time $f$ and $C_{n,q}$ is implicitly defined above. 
When $n$ is fixed, we claim that $M_{n,q}$
is $\Theta(f^{-(n-3)})$ as $f \to 0$.

Note that $g_{n-2,1}(\cdot)$ is the CDF of a random variable $T_{n-2}=\sum_1^{n-3}W_i$ where the $W_i$s are independent exponential waiting times with parameters $\lambda_i = {n-1-i\choose 2}$. Thus,
\begin{equation}
\label{eq:heuristic1}
P(T_{n-2} \leq f) \leq \prod_{i=1}^{n-3}P(W_i \leq f)
\leq (\prod_{i=1}^{n-3}\lambda_i)f^{n-3}.
\end{equation}

Since we have $ \lim_{f \downarrow 0} \frac{1-e^{-\lambda_i f}}{\lambda_i f}=1$,   for $f$ small enough we have $ \frac{1-e^{-\lambda_i f}}{\lambda_i f} \geq 0.9^{1/(n-3)}$ for all $i$. So we can write
\begin{align*}
P(T_{n-2} \leq f) 
&\geq \prod_{i=1}^{n-3}P\left(W_i \leq \frac{f}{n-3}\right)\\ &\geq  0.9\bigg(\prod_{i=1}^{n-3}\lambda_i\bigg)\bigg(\frac{f}{n-3}\bigg)^{n-3}.
\end{align*}
Thus for small $f$ and fixed $n$, the probability $g_{n-2,1}(f)$ scales as $f^{n-3}$.

We now note that
\begin{equation}
\label{eq:heuristic2}
M_{n,q} = C_{n,q}\frac{1}{-\log(1-g_{n-2,1}(f))} \stackrel{(a)}{\leq} \frac{C_{n,q}}{g_{n-2,1}(f)},
\end{equation}
where $(a)$ follows from $\log(x) \geq 1-1/x$ for positive $x$.
Also, by using $\log(x) \leq x -1$ and assuming that $f$ is small enough to ensure that the upper bound on $g_{n-2,1}$ in~\eqref{eq:heuristic1} is smaller than 0.9, we have 
\begin{equation}
\label{eq:heuristic3}
M_{n,q} \geq C_{n,q}\frac{1-g_{n-2,1}(f)}{g_{n-2,1}(f)} \geq C_{n,q}\frac{0.1}{g_{n-2,1}(f)}.
\end{equation}
The bounds in~\eqref{eq:heuristic2} and~\eqref{eq:heuristic3} tell us the term $M_{n,q}$ is $\Theta(f^{-(n-3)})$, which suggests that the number of trees required by heuristic ASTRAL-I is asymptotically significantly larger than that of ASTRAL*. 

Thus our theoretical analysis shows that, consistently with simulation results presented in~\cite{astral2}, for large $n$ there should be conditions where ASTRAL-I does not work well, but ASTRAL* has high accuracy. Note that ASTRAL-II has sought to close this gap heuristically and seems to have succeeded to a large extent based on simulation results~\cite{astral2}.

\end{document}